\title{Faster Minimization of Total Weighted Completion Time on Parallel Machines}
\author{Danny Hermelin\inst{1} \and Tomohiro Koana\inst{2} \and Dvir Shabtay}
\institute{
Department of Industrial Engineering and Management, Ben-Gurion University of the Negev, Beer-Sheva
\email{hermelin@bgu.ac.il}
\and
Utrecht University, Netherlands \& Research Institute for Mathematical Sciences, Kyoto University, Japan, \email{tomohiro.koana@gmail.com}
\and
Department of Industrial Engineering and Management, Ben-Gurion University of the Negev, Beer-Sheva
\email{dvirs@bgu.ac.il}
}
\begin{document}

\sloppy
\maketitle

\begin{abstract}
We study the classical problem of minimizing the total weighted completion time on a fixed set of $m$ identical machines working in parallel, the $Pm||\sum w_jC_j$ problem in the standard three field notation for scheduling problems. This problem is well known to be NP-hard, but only in the ordinary sense, and appears as one of the fundamental problems in any scheduling textbook. In particular, the problem served as a proof of concept for applying pseudo-polynomial time algorithms and approximation schemes to scheduling problems. The fastest known pseudo-polynomial time algorithm for $Pm||\sum w_jC_j$ is the famous Lawler and Moore algorithm from the late 1960's which runs in $\tilde{O}(P^{m-1}n)$ time\footnote{Throughout the paper we use $\tilde{O}(·)$ to suppress logarithmic factors}, where $P$ is the total processing time of all jobs in the input. After more than 50 years, we are the first to present an algorithm, alternative to that of Lawler and Moore, which is faster for certain range of the problem parameters (\emph{e.g.}, when their values are all~$O(1)$). %We also provide a slightly improved algorithm for the two machine $P2|| \sum w_jC_j$ problem.
\end{abstract}

%\end{frontmatter}
%%%%%%%%%%%%%%%%%%%%%%%%%%%%%%%%%%%%%%%%%%%%%%%%%%
%%%%%% Introduction
%%%%%%%%%%%%%%%%%%%%%%%%%%%%%%%%%%%%%%%%%%%%%%%%%%

\section{Introduction}
\label{sec:intro}%

%\textcolor{blue}{Minimizing total weighted completion time is a crucial objective in scheduling. When all jobs are available at time zero, a job's completion time equals its flow time, which typically corresponds to a job-dependent holding cost (often proportional to flow time). In production, jobs in the system are considered work-in-process (WIP) inventory, and the total weighted completion time reflects the total holding or inventory costs incurred by a schedule (see Pinedo XXX). This WIP holding cost represents the total expense associated with jobs in the system but not yet completed. Holding costs typically include employee wages, maintenance, energy usage, utilities, and other related expenses. They may also incorporate the opportunity cost of capital tied up in inventory and the risk of obsolescence or damage. For example, if a customer pays only upon job completion, late completion significantly impacts cash flow and the cost of capital. Early completion allows the manufacturer to reinvest and receive income sooner. In another scenario, jobs may require storage and maintenance during their time in the system, incurring maintenance costs. WIP can also occupy valuable workspace, potentially impacting production efficiency.}

In manufacturing, \emph{holding cost} refers to the total expense associated with storing inventory. This cost typically includes factors such as employee wages, maintenance and upkeep, energy usage, utilities, and other related expenses. Additionally, holding costs may incorporate the opportunity cost of capital tied up in inventory, as well as the risk of inventory obsolescence or damage. Effectively managing holding costs is essential, as excessive expenses can erode profit margins, while insufficient storage investment can lead to operational inefficiencies. In many industrial scenarios, holding cost is a critical factor in determining a business's profitability and overall competitiveness.

In the simplest and most basic case, the holding cost is uniform throughout the entire manufacturing process. Under this assumption, we can associate a weight (or penalty)~$w_j$ with each job~$j$, representing the job holding cost per unit of time. Consequently, the total holding cost for job $j$ is given by~$w_j C_j$, where $C_j$ is the total time the job spends in the manufacturing process. Minimizing the total weighted completion time, expressed as~$\sum w_j C_j$, becomes a central objective in scheduling theory. This metric captures the trade-off between processing jobs quickly and the penalties incurred from delays. As such, it serves as a foundation for analyzing and optimizing scheduling policies in various industrial and computational settings.

Indeed, minimizing~$\sum w_jC_j$ is one of the most classical objectives in scheduling literature~\cite{Pinedo2008}. In fact, one of the seminal results in this field pertains to its single-machine counterpart, the $1||\sum w_j C_j$ problem, which can be solved in  $O(n \log n)$ time by applying the so-called WSPT (Weighted Shortest Processing Time) rule, also known as Smith's rule~\cite{Smith1956} (see Lemma~\ref{lem:smith}). Since then, this rule has served as a foundational building block for algorithmic results in scheduling problems, see \emph{e.g.}~\cite{HohnJacobs2015,Nicole,DBLP:conf/stoc/SkutellaW99,Weiss92}. In this paper we are interested in the generalization of this problem to the case where we have $m$ identical machines operating in parallel, for some fixed constant $m$. The so called $Pm||\sum w_j C_j$ problem. 

Formally, in the $Pm||\sum w_j C_j$ problem we are given a set of $n$ jobs, $J=\{1,\ldots,n\}$, which are available for non-preemptive processing on a set of $m$ identical machines operating in parallel. Each job $j$ is associated with two positive integer parameters: ($i$) its \emph{processing time}~$p_j$ on any of the $m$ machines, and ($ii$) its \emph{weight}~$w_j$ indicating its holding cost per unit of time. A \emph{schedule} (solution) consists of $m$ permutations $\sigma=(\sigma_1,\ldots,\sigma_m)$ such that for all $i \neq j \in \{1,\ldots,m\}$, the set of elements in $\sigma_i$ and the set of elements in $\sigma_j$ have no elements in common, and the union of these sets is $J$. We say that job $j$ is scheduled on machine $i$ if $j$ appears in $\sigma_i$. In this way, permutation $\sigma_i$ represents the \emph{processing order} of jobs processed by machine $i$. The \emph{completion time} $C_j$ of job $j$, assuming that it is scheduled on machine $i$, is simply the total processing time of all jobs processed before $j$ on machine $i$ (including the processing time $p_j$ of $j$ itself). Therefore, it can be computed by $C_j = \sum_{\sigma^{-1}_i(k) \leq \sigma^{-1}_i(j)} p_k$. The objective is to find a schedule that minimizes the total weighted completion time~$\sum_j w_jC_j$, which corresponds to the total holding cost of all jobs. %In production systems, this can be viewed as the work-in-process (WIP) inventory cost~\cite{Pinedo2008}. Following Graham's~\cite{Graham69} classical three- field notation for scheduling problems, this problem is denoted by $Pm||\sum w_jC_j$, where $Pm$ stands for a given set of $m$ identical machines working in parallel.

The $Pm||\sum w_j C_j$ problem is NP-hard already for the two machine case, \emph{i.e.} the $P2||\sum w_j C_j$ problem. Indeed, when the weight of each job equals its processing time, the corresponding $P2||\sum p_j C_j$ problem is essentially the \textsc{Partition} problem~\cite{BrunoCoffman}. However, the problem is only weakly NP-hard, meaning it admits pseudo-polynomial algorithms. Indeed, Lawler and Moore~\cite{LawlerMoore} presented such an algorithm based on dynamic programming in the late 60s. Let $T_j[P']$ denote the minimum total weighted completion time of jobs $\{1,\ldots,j\}$ on two identical parallel machines, with the additional constraint that the total processing time of the jobs on the first machine is at most $P'$. Then, assuming the input set is ordered according to the WSPT rule, the Lawler and Moore algorithm computes the values~$T_j[P']$ in dynamic programming fashion via the following recursive function:
$$
T_j[P'] = \min
\begin{cases}
\quad T_{j-1}[P' - p_j] + w_j \cdot P',\\ 
\quad T_{j-1}[P'] + w_j \cdot (p_1 + \cdots + p_j -  P'), \\
\end{cases}
$$
where $T_0[P']=0$ for all possible values of $P'$. Using this recursive function, the Lawler and Moore algorithm can compute the minimum total weighted completion time of any $P2||\sum w_jC_j$ instance in $O(P \cdot n)$ time, where $P= \sum_j p_j$. 

The dynamic program described above can be naturally extended to the case of a constant $m > 2$  number of machines by computing tables $T_j[P'_1,\ldots,P'_{m-1}]$, where the entries correspond to the states of the first $m-1$ machines. This approach results in an algorithm with a running time of $O(P^{m-1} \cdot n)$ for the multiple machine case. Remarkably, even though this algorithm was devised over fifty years ago, it remains the fastest known pseudo-polynomial time algorithm for solving the $Pm||\sum w_jC_j$ problem. This lack of improvement is particularly surprising given the simplicity and fundamental nature of the algorithm. Despite its basic structure, another more efficient approach has yet been discovered, highlighting an intriguing gap in our understanding of this problem's computational complexity. This observation serves as the starting point of this paper.

\subsection{Our results}
The main contribution of this paper is a new pseudo-polynomial time algorithm for the $Pm||\sum w_j C_j$ problem, improving upon Lawler and Moore’s algorithms for a specific range of the problem parameters. Let $p_{\max} = \max_j p_j$ and $w_{\max} = \max_j w_j$ respectively denote the maximum processing time and weight in a given $Pm||\sum w_j C_j$ instance $J$, and let $P= \sum_j p_j$. Our main result is stated in the following theorem.
\begin{restatable}{theorem}{main}
\label{thm:main}%
$Pm||\sum w_jC_j$ can be solved in $\widetilde{O}(P^{m-1} + w_{\max}^{m+1} \cdot p_{\max}^{4m+1})$ time.
\end{restatable}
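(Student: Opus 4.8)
The plan is to keep the load-indexed dynamic program of Lawler and Moore as the backbone, but to eliminate the factor $n$ by (a) working with job \emph{types} rather than individual jobs, and (b) restricting the interesting part of the load space to a small window around a canonical load profile. First I would invoke Smith's rule (Lemma~\ref{lem:smith}): since every machine is internally ordered by WSPT, a schedule is nothing but a partition of $J$ into the $m$ machines, and after a global $O(n\log n)$ WSPT sort the Lawler--Moore recursion adds jobs in this order, each new job being the last (hence completing at the current machine load) on its machine. Grouping jobs by their pair $(p_j,w_j)$ leaves at most $p_{\max}\cdot w_{\max}$ distinct \emph{types}, and jobs of one type are interchangeable within a machine; this is what lets a whole batch of identical jobs be inserted in closed form.

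The heart of the argument is a \textbf{proximity lemma}: there is an optimal schedule whose machine-load vector $(L_1,\dots,L_m)$ lies within $\ell_\infty$-distance $R=O(w_{\max}\,p_{\max}^{4})$ of a canonical ``balanced-by-WSPT'' profile $\mathbf{L}^\ast$ that the algorithm can commit to in advance. I would prove this by an exchange argument on the positional form of the objective, $\sum_j w_jC_j=\sum_j w_jp_j+\sum_i\sum_{k\prec j\text{ on }i}w_jp_k$: if two machines are too far apart in load, moving a suitable type-block from the heavier to the lighter machine does not increase the cost, because the per-machine contribution of a fixed multiset of jobs is convex in how the blocks are stacked. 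Quantifying ``too far apart'' in terms of $p_{\max}$ and $w_{\max}$ (the interaction term scales with $w_{\max}$ and the granularity with $p_{\max}$) is exactly what yields the radius $R$, and this is the step I expect to be the main obstacle --- the interaction costs couple different types through the WSPT order, so the exchange must be set up to preserve global optimality while bounding the imbalance by the stated polynomial.

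Given the proximity lemma, the computation splits additively. The ``bulk'' types --- those whose balanced distribution alone overflows a window of width $2R$ --- are distributed in closed form and determine $\mathbf{L}^\ast$; assembling their contribution is a min-plus convolution of $m$ single-machine cost profiles over the load range $[0,P]$, which I would run as a load-indexed pass costing $\widetilde{O}(P^{m-1})$ with $O(1)$ amortized work per cell (the transition is Monge, so SMAWK removes any per-type overhead). This yields the first term; note that whenever $P\le R$ the window already covers all of $[0,P]$ and this pass is subsumed by the second term, so the two regimes are consistent.

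Finally, I would optimize the residual jobs inside the windows. The restricted Lawler--Moore DP ranges over $m-1$ free machine loads, each confined to an interval of width $2R$, giving $O(R^{m-1})$ states; it is processed in at most $p_{\max}w_{\max}$ stages (one per type), and distributing a type across the windowed loads costs a further $O(R)$ per state via the same batched/Monge update. The total is $O\!\left(R^{m}\cdot p_{\max}w_{\max}\right)=\widetilde{O}\!\left(w_{\max}^{m+1}p_{\max}^{4m+1}\right)$, matching the second term. Combining the two passes and taking the better of ``optimum within the windows'' against the bulk contribution gives the claimed $\widetilde{O}(P^{m-1}+w_{\max}^{m+1}p_{\max}^{4m+1})$ bound; the remaining care is purely in checking that the windowed optimum is globally optimal, which is precisely what the proximity lemma guarantees.
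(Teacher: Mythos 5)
Your high-level strategy---confine the Lawler--Moore load space to a window of radius polynomial in $w_{\max}$ and $p_{\max}$ around a balanced profile---is indeed the core idea of the paper, and your arithmetic for the second term ($R^{m}\cdot p_{\max}w_{\max}$ with $R=O(w_{\max}p_{\max}^4)$) does reproduce the claimed bound. But the proximity lemma, which you correctly identify as the main obstacle, is left unproven, and the sketch you give has two concrete problems. First, it targets the wrong object: you bound the deviation of the \emph{final} load vector $(L_1,\dots,L_m)$, whereas your windowed DP needs the loads to stay inside the window \emph{after every stage}, i.e., for every prefix of the WSPT order (equivalently, for the set of jobs with efficiency at least $e_{(i)}$, for every $i$). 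Balancing only the final loads restricts nothing about the intermediate states. Second, the local exchange you describe (move a type-block from a heavy machine to a light one) is exactly where the coupling bites: relocating a block of efficiency-class-$i$ jobs shifts the completion times of \emph{all} lower-efficiency jobs scheduled after it on both machines, so whether the exchange improves the cost depends on the entire suffix of both machines, not on a convexity statement about a fixed multiset. The paper circumvents this entirely: it rewrites the objective via the Goemans--Williamson identity as $\tfrac12\sum_i (e_{(i)}-e_{(i+1)})\sum_\ell P(J_1^\ell\cup\cdots\cup J_i^\ell)^2$ plus a constant, shows this is equivalent (up to schedule-independent terms) to minimizing $g(\sigma)=\sum_i (e_{(i)}-e_{(i+1)})\sum_\ell \Delta_{i,\ell}^2$ where $\Delta_{i,\ell}$ is the prefix-load deviation from the average, exhibits a greedy load-balanced schedule with $g\le m\,w_{\max}p_{\max}^2$, and uses the gap $e_{(i)}-e_{(i+1)}\ge 1/p_{\max}^2$ between consecutive efficiencies to conclude that any schedule with some $|\Delta_{i,\ell}|>\sqrt{m\,w_{\max}}\,p_{\max}^2$ is suboptimal. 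That global, variance-style argument is the missing mathematical content of your proof.

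There is also a gap in your first term. Your claim that the bulk pass is a min-plus convolution of $m$ profiles over $[0,P]$ computable in $\widetilde{O}(P^{m-1})$ with $O(1)$ amortized work per cell via Monge/SMAWK is not justified: min-plus convolution over an $(m-1)$-dimensional load grid is not generically near-linear in the number of cells, and you do not establish the Monge structure. In the paper the $\widetilde{O}(P^{m-1})$ term plays a different role: it is a pure feasibility computation (which splits of each efficiency class's total processing time across the $m$ machines are achievable), done by Minkowski sums via FFT-based polynomial multiplication. Note also that your grouping by type $(p_j,w_j)$ does not substitute for this: an efficiency class can contain several types of different sizes (e.g., $(1,2)$ and $(2,4)$), and the DP transitions are naturally indexed by efficiency classes, not types, since only the prefix loads at efficiency breakpoints enter the objective.
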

\noindent Thus, when $w_{\max}$ and $p_{\max}$ are sufficiently small, this running time is indeed faster than Lawler and Moore's $O(P^{m-1} \cdot n)$ time algorithm. For example, when both $w_{\max}$ and $p_{\max}$ are bounded by some constant, our algorithm runs in $\widetilde{O}(n^{m-1})$ time, while Lawler and Moore's algorithm runs in~$O(n^m)$ time. Specifically, our algorithm is the first (nearly) linear time algorithm for $P2||\sum w_jC_j$ with $w_{\max},p_{\max}=O(1)$.

\begin{figure}[ht!]
\centering
\includegraphics[width=110mm]{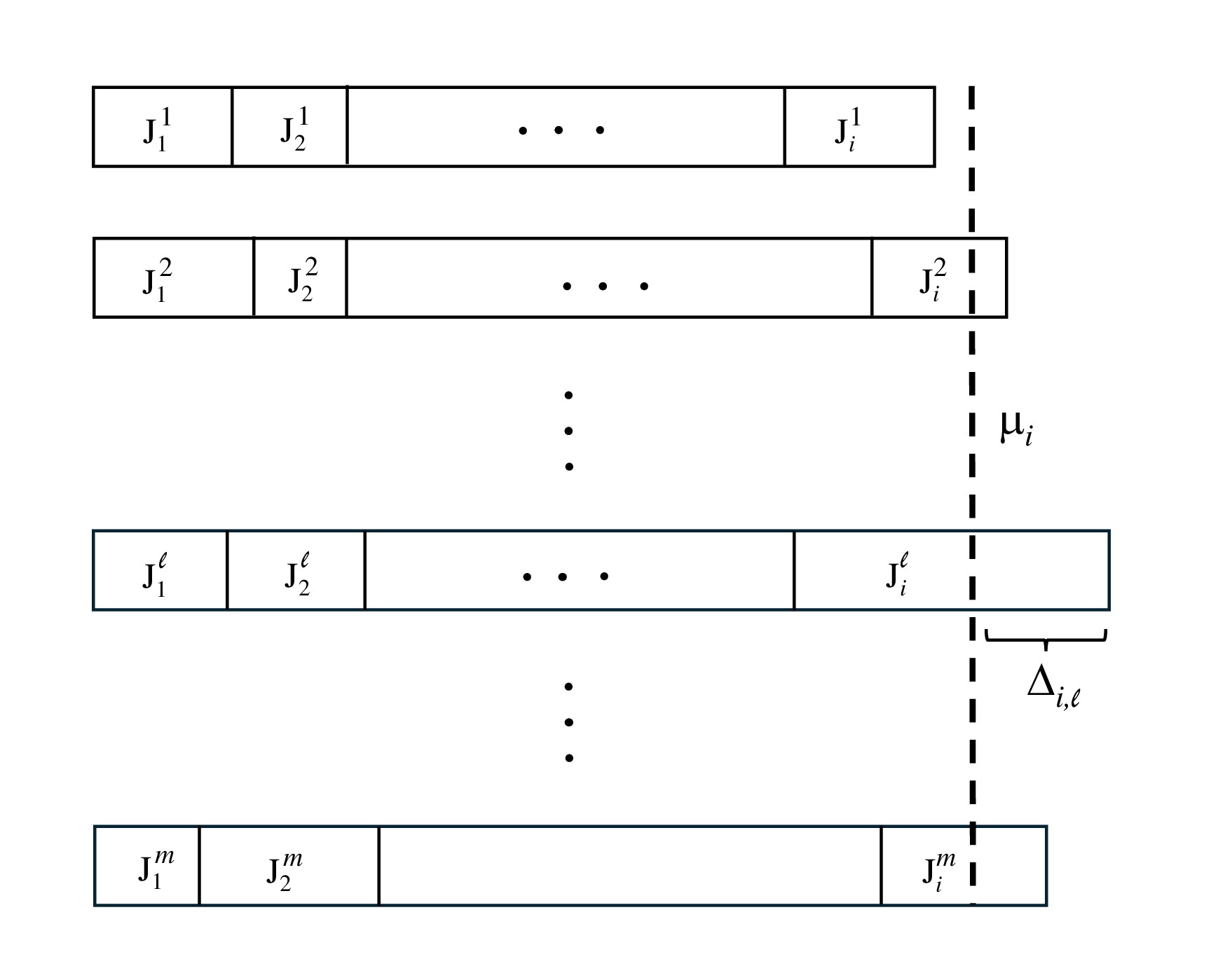}
\caption{A depiction of a WSPT schedule $\sigma$ for jobs with efficiency at least $e_{(i)}$, along with the average processing time $\mu_i$ of these jobs, and the deviation of $\sigma$ for jobs with efficiency at least $e_{(i)}$ on machine $\ell$ (which is negative in the example). 
\label{fig:wjCj}}%
\end{figure}

A central component of our algorithm is a new theorem concerning the structure of optimal schedules for the $Pm||\sum w_jC_j$ problem. Define the \emph{efficiency} $e_j$ of job $j$ as~$e_j = w_j / p_j$. According to Smith's rule, there always exists an optimal schedule for any $Pm||\sum w_jC_j$ instance $J$ where the jobs are scheduled in non-increasing order of efficiencies on each machine. We refer to such schedules as WSPT (Weighted Shortest Processing Time) schedules. Let $e_{(1)} > \cdots > e_{(k)}$ denote the \emph{distinct} efficiencies among the jobs in the input set, and define $\mu_i$ to be the \emph{average processing time (or load)} of jobs with efficiency at least $e_{(i)}$; that is, the total processing time of all such jobs divided by the number of machines~$m$. For a given WSPT schedule $\sigma$, and for  $1 \leq i \leq k$ and $1 \leq \ell \leq m$, let~$\Delta_{i,\ell}$ be the \emph{deviation} of $\sigma$ for jobs with efficiency at least $e_{(i)}$ on machine $\ell$; \emph{i.e.}, the difference between $\mu_i$ and the total processing time of such jobs that $\sigma$ schedules on machine $\ell$ (see Figure~\ref{fig:wjCj}). 
\begin{restatable}{theorem}{structure}
\label{thm:structure}%
In any optimal WSPT schedule for any $Pm||\sum w_j C_j$ instance with $k$ distinct efficiencies, we have 
$$
|\Delta_{i,\ell}| \leq  \sqrt{m \cdot w_{\max}} \cdot p_{\max}^2
$$ 
for each $1 \leq i \leq k$ and~$1 \leq \ell \leq m$.    
\end{restatable}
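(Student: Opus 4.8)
The plan is to recast the objective as a weighted sum of squared load deviations and then bound each deviation through a global energy estimate combined with an arithmetic gap between distinct efficiencies. Fix a WSPT schedule $\sigma$ and, for each threshold $i$ and machine $\ell$, let $P_{i,\ell}$ denote the total processing time of jobs with efficiency at least $e_{(i)}$ on machine $\ell$, so that $\Delta_{i,\ell}=\mu_i-P_{i,\ell}$ and $\sum_\ell \Delta_{i,\ell}=0$ for every $i$. First I would compute the objective block by block: on machine $\ell$ the jobs of a fixed efficiency $e_{(i)}$ occupy the interval $(P_{i-1,\ell},P_{i,\ell}]$ and, since each such job has $w_j=e_{(i)}p_j$, contribute $e_{(i)}\big(\tfrac12(P_{i,\ell}^2-P_{i-1,\ell}^2)+\tfrac12\sum p_j^2\big)$. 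Summing over blocks and machines, the $\sum p_j^2$ terms collapse to the schedule-independent constant $\tfrac12\sum_j w_jp_j$, and an Abel summation of the telescoping part (with $d_i:=e_{(i)}-e_{(i+1)}$ and $e_{(k+1)}:=0$) yields $\tfrac12\sum_\ell\sum_i d_iP_{i,\ell}^2$. Expanding $P_{i,\ell}=\mu_i-\Delta_{i,\ell}$ and using $\sum_\ell\Delta_{i,\ell}=0$ gives
$$ \sum_j w_jC_j=\Big(\tfrac12\sum_j w_jp_j+\tfrac{m}{2}\sum_{i=1}^k d_i\mu_i^2\Big)+\tfrac12\sum_{i=1}^k d_i\sum_{\ell=1}^m \Delta_{i,\ell}^2. $$
Since the parenthesised term is the same for every WSPT schedule, an optimal WSPT schedule is exactly one minimising the energy $G:=\sum_i d_i\sum_\ell \Delta_{i,\ell}^2$.

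Next I would upper bound $G$ at the optimum by exhibiting one concrete WSPT schedule with small energy. Consider the greedy schedule that processes the jobs in WSPT order and always assigns the next job to a currently least-loaded machine; this is a WSPT schedule, and the standard list-scheduling invariant shows that after each efficiency class is placed the machine loads differ by at most $p_{\max}$. Hence $|\Delta_{i,\ell}|\le p_{\max}$ for all $i,\ell$ in this schedule, so $G\le mp_{\max}^2\sum_i d_i=mp_{\max}^2 e_{(1)}\le mw_{\max}p_{\max}^2$, using $\sum_i d_i=e_{(1)}$ and $e_{(1)}=\max_j w_j/p_j\le w_{\max}$. Because the optimal WSPT schedule minimises $G$, it too satisfies $G\le mw_{\max}p_{\max}^2$.

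The final and decisive step is to convert this aggregate $\ell_2$ bound into a bound on a single entry $\Delta_{i_0,\ell_0}$, for which I would exploit that the coefficients $d_i$ cannot be too small. The distinct efficiencies are distinct fractions $w_j/p_j$ with denominators at most $p_{\max}$, so any two of them differ by at least $1/p_{\max}^2$ (their difference is a nonzero integer over a denominator at most $p_{\max}^2$); consequently $d_i\ge 1/p_{\max}^2$ for every $i$ (and $d_k=e_{(k)}\ge 1/p_{\max}$). Since $d_{i_0}\Delta_{i_0,\ell_0}^2$ is a single nonnegative summand of $G$, we obtain $\Delta_{i_0,\ell_0}^2\le G/d_{i_0}\le mw_{\max}p_{\max}^2\cdot p_{\max}^2$, i.e. $|\Delta_{i_0,\ell_0}|\le\sqrt{m w_{\max}}\,p_{\max}^2$, as claimed; note how the arithmetic gap supplies exactly the missing $p_{\max}^2$ factor and reproduces the constants in the statement.

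The step I expect to be the main obstacle is isolating a single deviation. A direct local-exchange argument is tempting but deceptive: moving one job between machines simultaneously shifts the deviations at all coarser thresholds, so such moves only control the weighted tail sums $\sum_{i\ge q}d_i\Delta_{i,\ell}$, and differencing these to recover one level reintroduces an uncontrolled factor $1/d_{i_0}$. The plan avoids exchanges altogether by passing through the global energy $G$ and recovering the per-level estimate from the minimum-gap bound $d_{i_0}\ge 1/p_{\max}^2$; recognising that this number-theoretic fact is what tames the otherwise unbounded $1/d_{i_0}$ is the conceptual heart of the proof. The remaining care is routine: pinning down the exact constant in the block-cost identity of the first step, and verifying the list-scheduling load invariant used in the second.
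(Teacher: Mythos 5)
Your proposal is correct and follows essentially the same route as the paper: rewrite the objective as a constant plus the energy $\sum_i (e_{(i)}-e_{(i+1)})\sum_\ell \Delta_{i,\ell}^2$ (the paper cites the Goemans--Williamson identity where you re-derive it by blocks, but the resulting formula is identical to the paper's Lemma~\ref{lem:AlternativeForumlation}), bound the optimal energy by $m\,w_{\max}\,p_{\max}^2$ via the greedy list-scheduling schedule with $|\Delta_{i,\ell}|\le p_{\max}$, and isolate a single term using the gap $e_{(i)}-e_{(i+1)}\ge 1/p_{\max}^2$. All three steps, including the constants, match the paper's Lemmas~\ref{lemma:opt-bound} and~\ref{lem:EfficienciesDiffernce} and its proof of Theorem~\ref{thm:structure}.
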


We also demonstrate that our algorithm can be further improved for the two-machine case by utilizing a slightly different structure theorem. Specifically, let $\delta_i$ denote the difference between the load of jobs with efficiency at least $e_{(i)}$ on the first and second machine. Similar to the above, we can show that $|\delta_i| \leq \sqrt{w_{\max}} \cdot p^2_{\max}$ for any efficiency~$e_{(i)}$. However, as it turns out, using the $\delta_i$ values instead of the the $\Delta_{i,\ell}$ values leads to an improved running time for the two machine case. Our secondary result is stated in the following theorem.
\begin{restatable}{theorem}{secondary}
\label{thm:secondary}%
$P2||\sum w_jC_j$ can be solved in $\widetilde{O}(P + w_{\max}^{2} \cdot p_{\max}^{5})$ time.
\end{restatable}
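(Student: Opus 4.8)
The plan is to confine the search to WSPT schedules (justified by Smith's rule, Lemma~\ref{lem:smith}) and to exploit the fact that in an optimum the cumulative load difference between the two machines stays within a narrow window, so that an optimal schedule can be recovered by a dynamic program whose state is exactly this difference. Write $P_i$ for the total processing time of the jobs with efficiency at least $e_{(i)}$; in a WSPT schedule with difference $\delta_i$ on these jobs, machine $1$ carries load $(P_i+\delta_i)/2$ and machine $2$ carries $(P_i-\delta_i)/2$. First I would establish, by the same exchange argument underlying Theorem~\ref{thm:structure} specialised to $m=2$, the sharper bound $|\delta_i| \le \sqrt{w_{\max}}\cdot p_{\max}^2$ claimed in the introduction. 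This restricts every $\delta_i$ to a window $\mathcal{W}$ of width $W = O(\sqrt{w_{\max}}\cdot p_{\max}^2)$.

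The next, and conceptually central, step is to show that the objective is a \emph{separable} function of the sequence $\delta_1,\ldots,\delta_k$. Grouping jobs by distinct efficiency and writing $C_j$ for a job $j$ of class $i$ as its machine's load inherited from classes $1,\dots,i-1$ plus a within-class prefix, the contribution of class $i$ splits into a \emph{cross-class} term $e_{(i)}\bigl(L_1\cdot q_1 + L_2\cdot q_2\bigr)$, where $L_1,L_2$ are the loads from the earlier classes and $q_1,q_2$ the loads class $i$ places on the two machines, and a \emph{within-class} term. The within-class term equals $\tfrac{e_{(i)}}{2}\bigl(q_1^2 + q_2^2 + \sum_{j\in\text{class }i} p_j^2\bigr)$, using the order-independent identity $\sum_{k<j} p_k p_j = \tfrac12\bigl((\sum_j p_j)^2 - \sum_j p_j^2\bigr)$; in particular it depends only on the loads $q_1,q_2$, not on the internal ordering nor on which specific equal-efficiency jobs are chosen. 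Since $L_1,L_2$ are determined by $\delta_{i-1}$ and $q_1,q_2$ by the pair $(\delta_{i-1},\delta_i)$, the total weighted completion time is a sum of per-class costs, each a function of $(\delta_{i-1},\delta_i)$ alone.

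Given this, the dynamic program is immediate: the state after processing class $i$ is the value $\delta_i \in \mathcal{W}$, and the transition from $\delta_{i-1}$ to $\delta_i$ adds the per-class cost above, provided the corresponding class-$i$ split into loads $q_1,q_2$ is realisable by a subset of the class. Feasibility of the split I would handle by grouping the class into at most $p_{\max}$ sub-types of equal processing time and computing, by a bounded subset-sum over the window, which near-balanced loads $q_1$ are achievable. Each class then contributes a $(\min,+)$-style combination costing $O(W)$ per target value, hence $O(W^2)$ per class. With $k = O(w_{\max}\cdot p_{\max})$ distinct efficiencies and $W^2 = O(w_{\max}\cdot p_{\max}^4)$, the program runs in $O(k\cdot W^2) = O(w_{\max}^2\cdot p_{\max}^5)$ time, on top of $\widetilde{O}(P)$ preprocessing to sort the jobs into WSPT order and to tabulate the prefix loads $P_i$ and the sums $\sum_{j\in\text{class }i}p_j^2$. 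Summing the two contributions gives the claimed $\widetilde{O}(P + w_{\max}^2\cdot p_{\max}^5)$ bound.

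The main obstacle I anticipate is twofold. First, proving the window bound $|\delta_i|\le\sqrt{w_{\max}}\cdot p_{\max}^2$ with the right constant: the $\delta$-based analysis must be carried out directly rather than deduced from Theorem~\ref{thm:structure}, since it is precisely this sharper, single-parameter window (as opposed to tracking two per-machine deviations $\Delta_{i,\ell}$) that yields the improved exponent in the two-machine case. Second, and more delicate, is verifying that restricting the program to $\delta_i\in\mathcal{W}$ never discards the optimum: one must argue that an optimal WSPT schedule, which by the window bound already satisfies $|\delta_i|\le W$ at \emph{every} prefix, is reachable through states that all lie in $\mathcal{W}$, so that the per-class transitions together with the windowed subset-sum feasibility check suffice to reconstruct it. The remaining computations — expanding $C_j$, summing the per-class costs, and the subset-sum bookkeeping — are routine once the decomposition and the window are in place.
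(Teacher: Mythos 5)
Your proposal is correct and follows essentially the same route as the paper: a dynamic program over the two-machine load difference $\delta_i$, restricted to a window of width $O(\sqrt{w_{\max}}\cdot p_{\max}^2)$ obtained by specializing the structure theorem to $m=2$, with per-class feasibility via subset sums and the same $O(k\cdot W^2)$ accounting. The only cosmetic difference is that you derive the separability of the objective in the $\delta_i$'s by expanding completion times class by class, whereas the paper gets the equivalent decomposition from the Goemans--Williamson formula together with the identity $x_1^2+x_2^2=\tfrac12\bigl((x_1+x_2)^2+(x_1-x_2)^2\bigr)$.
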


%Finally, to complement our algorithms, we also present a new lower bound for $P2||\sum w_j C_j$. Our lower bound is based on variant of the Strong Exponential Time Hypothesis (SETH) of Impagliazzo, Paturi, and Zane~\cite{IP2001,IPZ2001} called $\forall\exists$-SETH (see ...). Using the \textsc{AND Partition} technique which was developed under this new hypothesis~\cite{AbboudBHS22}, we are able to prove the following theorem: 
%\begin{restatable}{theorem}{seth}
%\label{thm:seth}%
%Let~$0 < \varepsilon < 1$. There is no $\widetilde{O}((P+w_{\max})^{1-\varepsilon})$ time algorithm for $P2||\sum w_jC_j$ unless $\forall\exists$-SETH is false.  
%\end{restatable}

\subsection{Related work}

Several results concerning the objective of minimizing the total weighted completion time on parallel machines serve as key cornerstones in the field of scheduling theory. As mentioned above, one of the seminal results in the field is the $O(n \log n)$ time algorithm for the single machine $1||\sum w_jC_j$~\cite{Smith1956}. Another classical variant of the problem that can be solved in $O(n \log n)$ time is $P|| \sum C_j$ where jobs have unit-weight and the scheduling is done on an arbitrary number of identical parallel machines~\cite{ConwayMM1967} ($P$ replaces $Pm$ in the first field of the three field notation to indicate that the number of machines is not fixed). Nevertheless, other special cases are much harder to solve. For example, when $p_j=w_j$ for $j\in\{1,\ldots,n\}$ the corresponding $Pm||\sum p_j C_j$ is NP-hard for any fixed $m \geq 2$~\cite{BrunoCoffman}. Moreover, when $m$ is arbitrary, the corresponding $P||\sum w_j C_j$ becomes strongly NP-hard (see problem SS13 in Garey and Johnson~\cite{GareyJohnson}). The fact that $Pm||\sum w_j C_j$ is only weakly NP-hard is due to Lawler and Moore's pseudo-polynomial time algorithm mentioned above~\cite{LawlerMoore}. Jansen and Kahler~\cite{Jansenee} showed that $P2||\sum w_j C_j$ cannot be solved in $\tilde{O}(P^{1-\varepsilon})$ time for any $\varepsilon > 0$ assuming SETH.%, using the fact that the same bound was already known for the \textsc{Partition} problem~\cite{AbboudBHS17}.

A few variants of $Pm||\sum w_jC_j$ have been studied from the perspective of parameterized complexity. Mnich and Wiese~\cite{MnichW15} studied a variant where rejection is allowed, and the accepted jobs are scheduled on a single machine. They presented FPT and W[1]-hardness results regarding various parameters of the problem.
%of $P||\sum w_j C_j$ where rejections are allowed, and presented FPT and W[1]-hardness results regarding various parameters of the problem. 
Knop and Kouteck{\'{y}}~\cite{DBLP:journals/scheduling/KnopK18} showed that $P||\sum w_j C_j$ is fixed parametrized tractable (FPT) for parameter $m+k$, where $k$ is the number of distinct efficiencies in the input set of jobs. This is done by formulating the problem as an $n$ fold integer programming, and using Hemmecke \emph{et al}~\cite{DBLP:journals/mp/HemmeckeOR13} algorithm to solve the later problem. Complimenting this result, Knop and Kouteck{\'{y}} also showed that the problem is W[1]-hard when only $m$ is taken as a parameter.% (when $k$ is taken as a parameter the problem is para-NP-hard, as the case of $k=1$ includes \textsc{Partition} as a special case). %The fact that $Pm||\sum w_j C_j$ is NP-hard only in the ordinary sense is due to the fact that the problem admits a pseudo-polynomial $O(P^{m-1}\cdot n)$ time algorithm~\cite{LawlerMoore}. Jansen and Kahler~\cite{Jansenee} showed that $P2||\sum w_j C_j$ cannot be solved in $\tilde{O}(P^{1-\varepsilon})$ time for any $\varepsilon > 0$ assuming SETH, using the fact that the same bound was already known for the \textsc{Partition} problem~\cite{AbboudBHS17}.

Despite limited progress in exact algorithms for minimizing total weighted completion times on parallel machines, substantial advances have been made in approximation methods. Sahni~\cite{SahniFPTAS} used the $P2||\sum w_jC_j$ problem as a key example to demonstrate the construction of fully polynomial-time approximation schemes (FPTAS) for scheduling problems. He presented a~$(1+\varepsilon)$-approximation algorithm for the problem, which runs in $O(n^2/\varepsilon)$ time. A classical result regarding the $P||\sum w_j C_j$ problem (the variant where the number of machines is arbitrary) shows that applying the WSPT rule greedily leads to a $(1+\sqrt{2})/2$-approximation~\cite{KK1986}. Skutella and Woeginger~\cite{DBLP:conf/stoc/SkutellaW99} improved this result by providing a polynomial time approximation scheme (PTAS) for the $P||\sum w_j C_j$ problem. A PTAS for the more general case where
machines are unrelated and jobs have arbitrary release dates appears in~\cite{DBLP:conf/focs/AfratiBCKKKMQSSS99}. Over the years, various approximation algorithms have been developed for several other strongly NP-hard generalizations of $1||\sum w_j C_j$ focusing on problems with arbitrary release dates and/or with precedence constraints (see for instance~\cite{DBLP:conf/stoc/BansalSS16,DBLP:journals/mp/CorreaM22,DBLP:conf/focs/ImL16,DBLP:conf/sosa/JagerW24,Jaeger:18:Approximating-total-weighted,DBLP:conf/focs/Li17,SittersYang:17:A-2+epsilon-approximation}).

\section{Preliminaries}

We let $J=\{1,\ldots,n\}$ denote a given $Pm||\sum w_j C_j$ instance, where the \emph{processing time} and \emph{weight} of job $j \in J$ are respectively denoted by $p_j$ and $w_j$. The \emph{efficiency} of a job~$j$ is defined as~$e_j = w_j / p_j$. The following cornerstone structural result for $Pm||\sum w_j C_j$, known as Smith's rule, states that it is always optimal to schedule jobs on each machine in non-increasing order of efficiencies.  
\begin{lemma}[\textbf{Smith's rule}~\cite{Smith1956}]
\label{lem:smith}
There exists an optimal schedule for $Pm||\sum w_jC_j$ where the jobs are scheduled in non-increasing order of efficiencies (WSPT order) on each machine.
\end{lemma}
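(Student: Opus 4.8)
The plan is to prove the statement by a standard adjacent-transposition exchange argument applied independently to each machine. Starting from any optimal schedule, I will show that if two jobs appear in the wrong relative order on some machine then swapping them cannot increase the objective, and in fact strictly decreases it when their efficiencies differ; this forces every machine in an optimal schedule to process its jobs in non-increasing order of efficiency.

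First I would isolate the relevant local change. Fix a machine and two jobs $j$ and $k$ scheduled consecutively on it, with $j$ immediately preceding $k$, and let $t$ be the total processing time of all jobs preceding $j$ on this machine. In the given schedule the pair contributes $w_j(t+p_j)+w_k(t+p_j+p_k)$ to $\sum_j w_j C_j$, whereas after exchanging the two positions it contributes $w_k(t+p_k)+w_j(t+p_k+p_j)$. Expanding the difference, every term involving $t$ cancels and one is left with exactly $w_k p_j - w_j p_k = p_j p_k\,(e_k - e_j)$, using $e_j = w_j/p_j$ and $e_k = w_k/p_k$.

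The step I expect to be the real crux is the locality claim that must accompany this computation: the swap changes \emph{nothing} else in the schedule. The jobs preceding $j$ keep their positions and hence their completion times; each job following $k$ still has total processing time $t+p_j+p_k$ scheduled before it, so its completion time is unchanged; and no job on any other machine is touched. Consequently the swap alters the global objective by precisely $-(w_k p_j - w_j p_k) = -p_j p_k\,(e_k - e_j)$, and by nothing more.

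Finally I would draw the conclusion. If $\sigma$ is optimal yet some machine contains an adjacent pair $j,k$ with $j$ before $k$ and $e_j < e_k$, then $e_k - e_j > 0$ makes the displayed quantity strictly negative, so the swap produces a schedule of strictly smaller objective, contradicting optimality. Hence in an optimal schedule every adjacent pair on every machine satisfies $e_j \geq e_k$, and by transitivity each machine's sequence is non-increasing in efficiency; pairs of equal efficiency may be ordered arbitrarily, since for them the swap is cost-neutral. This yields the desired WSPT schedule (indeed, it shows every optimal schedule is already of this form).
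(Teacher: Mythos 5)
Your proposal is correct: the adjacent-transposition computation is right (the swap changes the objective by exactly $-p_jp_k(e_k-e_j)$, and the locality claim holds since the jobs following the swapped pair see the same total load $t+p_j+p_k$), and the conclusion that every optimal schedule is in WSPT order on each machine follows. The paper itself states this lemma as a citation to Smith (1956) without giving a proof, and your exchange argument is precisely the classical proof of that result, so there is nothing substantive to compare.
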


We refer to (not necessarily optimal) schedules as in Lemma~\ref{lem:smith} above as \emph{WSPT schedules}. Let $e_{(1)} > \cdots > e_{(k)} > e_{(k+1)} = 0$ denote the \emph{distinct} efficiencies of jobs in~$J$, and let $J_i =\{j \in J: e_j = e_{(i)}\}$ denote the subset of jobs with efficiency~$e_{(i)}$, for each $i \in \{1,\ldots,k\}$. For a given WSPT schedule $\sigma$, we let $J^\ell_i(\sigma)$ denote the set of jobs with efficiency~$e_{(i)}$ that are scheduled by $\sigma$ on machine $\ell \in \{1,\ldots,m\}$. We will often omit $\sigma$ from this notation when the WSPT schedule in question is clear from the context. 
For a subset of jobs $J' \subseteq J$, we write~$P(J')= \sum_{j \in J'} p_j$ (and so~$P=P(J)$). 
For each $i \in \{ 1, \dots, k \}$, we define $\mu_i$ as $\frac{1}{m} P(J_1 \cup \cdots \cup J_i)$.

Goemans and Williamson~\cite{GoemansWillaimson00} showed an alternative, often more convenient way for computing the total weighted completion time of WSPT schedules. 
\begin{lemma}[\cite{GoemansWillaimson00}] 
\label{lem:GoemansWilliamson}%
The total weighted completion time of any WSPT schedule $\sigma$ for $J$ is
\begin{align*}
f(\sigma) = \frac{1}{2} \sum_{i = 1}^{k}  \Big[ \big(e_{(i)}-e_{(i+1)}\big)  \cdot \sum^m_{\ell=1} P(J^\ell_1 \cup \cdots \cup J^\ell_i)^2 \Big] + \frac{1}{2} \sum^n_{j =1} w_j p_j.
\end{align*}
\end{lemma}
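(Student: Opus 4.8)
The plan is to prove the identity one machine at a time and then sum over machines. Both sides split additively over the $m$ machines: the completion time $C_j$ depends only on the jobs preceding $j$ on its own machine, each term $P(J^\ell_1 \cup \cdots \cup J^\ell_i)^2$ is indexed by a machine $\ell$, and $\sum_{j=1}^n w_j p_j = \sum_{\ell=1}^m \sum_{j \in J^\ell} w_j p_j$, where I write $J^\ell$ for the set of jobs $\sigma$ schedules on machine $\ell$. Hence it suffices to fix a single machine $\ell$, set $S_i := P(J^\ell_1 \cup \cdots \cup J^\ell_i)$ (the load of efficiency at least $e_{(i)}$ on $\ell$), and prove $\sum_{j \in J^\ell} w_j C_j = \tfrac12 \sum_{i=1}^k (e_{(i)} - e_{(i+1)}) S_i^2 + \tfrac12 \sum_{j \in J^\ell} w_j p_j$. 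My target is to show that both sides of this single-machine identity equal $\sum_{j \in J^\ell} w_j p_j + \sum_{\{k,j\}} p_k p_j \min(e_k, e_j)$, the last sum taken over unordered pairs of distinct jobs on $\ell$.

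First I would rewrite the left-hand side by expanding each completion time. Since $\sigma$ is a WSPT schedule, on machine $\ell$ the jobs are ordered by non-increasing efficiency, so $C_j = \sum_{k \preceq j} p_k$ for this order $\preceq$, and $k \prec j$ forces $e_k \ge e_j$. Separating the diagonal from the off-diagonal pairs gives $\sum_{j} w_j C_j = \sum_j w_j p_j + \sum_{k \prec j} w_j p_k$. For an unordered pair $\{k,j\}$ the later job, say $j$, carries the weight, and since it is the less efficient one, $w_j p_k = e_j p_j p_k = p_j p_k \min(e_k, e_j)$; thus the off-diagonal sum equals $\sum_{\{k,j\}} p_k p_j \min(e_k, e_j)$, matching the target.

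Next I would expand the right-hand side to meet the same expression. Writing $S_i^2 = \sum_{k, k' \in J^\ell} p_k p_{k'}$ restricted to pairs with $e_k \ge e_{(i)}$ and $e_{k'} \ge e_{(i)}$, and interchanging the order of summation, turns $\tfrac12 \sum_i (e_{(i)} - e_{(i+1)}) S_i^2$ into $\tfrac12 \sum_{k,k'} p_k p_{k'} \sum_{i : e_{(i)} \le \min(e_k, e_{k'})} (e_{(i)} - e_{(i+1)})$. The inner sum telescopes to $\min(e_k, e_{k'})$, because the indices $i$ with $e_{(i)} \le \min(e_k, e_{k'})$ form a suffix $\{i_0, \dots, k\}$ with $e_{(i_0)} = \min(e_k, e_{k'})$, and $\sum_{i \ge i_0} (e_{(i)} - e_{(i+1)}) = e_{(i_0)}$ since $e_{(k+1)} = 0$. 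Splitting the resulting double sum into diagonal terms, which give $\tfrac12 \sum_k p_k^2 e_k = \tfrac12 \sum_k w_k p_k$, and off-diagonal terms, which give $\sum_{\{k,k'\}} p_k p_{k'} \min(e_k, e_{k'})$, then adding the trailing $\tfrac12 \sum_k w_k p_k$, again yields the target $\sum_k w_k p_k + \sum_{\{k,k'\}} p_k p_{k'} \min(e_k, e_{k'})$.

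This is essentially a double-counting and Abel-summation calculation, so I do not expect a deep obstacle. The one place that needs care is the bookkeeping that makes the two expansions coincide: matching the rule ``the later, less efficient job carries the weight'' on the left with the identity ``the threshold sum telescopes to the smaller efficiency'' on the right, while keeping the factors of $\tfrac12$ and the diagonal terms straight. I would also note that ties in efficiency make the WSPT order non-unique, but every tie-breaking contributes the same $p_k p_{k'} \min(e_k,e_{k'})$ to each pair, so the identity is independent of how ties are resolved.
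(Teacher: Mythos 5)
The paper does not prove this lemma at all --- it is imported verbatim from Goemans and Williamson with a citation --- so there is no in-paper argument to compare against. Your blind derivation is correct and self-contained. The per-machine decomposition is legitimate (completion times, the squared loads $P(J^\ell_1\cup\cdots\cup J^\ell_i)^2$, and $\sum_j w_jp_j$ all split over machines), and both expansions check out: on the left, $\sum_{j}w_jC_j=\sum_j w_jp_j+\sum_{k\prec j}w_jp_k$ with the later job $j$ satisfying $w_jp_k=p_jp_k\min(e_k,e_j)$ under WSPT order; on the right, writing $S_i^2$ as a sum over ordered pairs and swapping the order of summation, the inner sum $\sum_{i\,:\,e_{(i)}\le\min(e_k,e_{k'})}(e_{(i)}-e_{(i+1)})$ telescopes to $\min(e_k,e_{k'})$ precisely because $e_{(k+1)}=0$ and $\min(e_k,e_{k'})$ is itself one of the distinct efficiencies, the diagonal contributes $\tfrac12\sum_k p_k^2e_k=\tfrac12\sum_k w_kp_k$, and the factor $\tfrac12$ converts ordered off-diagonal pairs to unordered ones. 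Your remark that ties in efficiency contribute $p_kp_{k'}\min(e_k,e_{k'})$ regardless of tie-breaking is exactly the right observation to make the identity well defined over all WSPT orders. The only cosmetic flaw is the reuse of the symbol $k$ both as a job index and as the number of distinct efficiencies; rename one of them if you write this up.
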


\section{Structure of optimal WSPT schedules:}

In the following section we prove Theorem~\ref{thm:structure}, the main structural result of the paper restated below: 
\structure*
\noindent Recall that $\Delta_{i,\ell}$ is the difference between the average load of jobs with efficiency at least~$e_{(i)}$ and the load of these jobs on machine $\ell$, \emph{i.e} $\Delta_{i,\ell}=\mu_i -P(J^\ell_1 \cup \cdots \cup J^\ell_i)$ for each $1 \leq i \leq k$ and $1 \leq \ell \leq m$ (see Figure~\ref{fig:wjCj}). We prove Theorem~\ref{thm:structure} through a series of claims. 

%that we use $e_{(1)} > \cdots > e_{(k)} > e_{(k+1)} = 0$ to denote the distinct efficiencies in our input set of jobs, where the efficiency of job~$j$ is defined by $w_j/p_j$. By Lemma~\ref{lem:smith}, we can restrict our attention to WSPT schedules where the jobs on each machine are scheduled in non-increasing order of efficiencies. Consider an optimal WSPT schedule, and let $J^s_i$ denote the set of jobs of efficiency $e_{(i)} \in \{e_{(1)},\ldots,e_{(k)}\}$ that are scheduled on machine $s \in \{1,\ldots,m\}$. For each $1 \leq i \leq k$ and $1 \leq s < t \leq m$, we define $\Delta_{i,s,t} = P(J^s_1 \cup \dots \cup J^s_i) - P(J^t_1 \cup \dots \cup J^t_i)$ to be the difference in the load of machines $s$ and $t$ on jobs with efficiency at least~$e_{(i)}$ (see Figure~\ref{fig:wjCj}). Our structural result is stated in the following theorem:

%Let $\sigma$ be a WSPT schedule, and let the $J^s_i$'s be defined as above. By Lemma~\ref{lem:GoemansWilliamson}, the weighted completion of $\sigma$ is given by the following: 
%\begin{align*}
%f(\sigma) =  \frac{1}{2} \sum_{i = 1}^{k}  \Big[ \big(e_{(i)}-e_{(i+1)}\big)  \cdot \sum^m_{s=1} P(J^s_1 \cup \cdots \cup J^s_i)^2 \Big] + \frac{1}{2} \sum^n_{j=1} w_j p_j.
%\end{align*}

\subsection{Yet another alternative formulation}

Our first step towards proving Theorem~\ref{thm:structure} is to show that minimizing the function $f()$ of Lemma~\ref{lem:GoemansWilliamson} above is equivalent to minimizing a function that depends only on the differences between distinct efficiencies, and the $\Delta_{i,\ell}$ values. 

\begin{definition}
\label{def:g}
The function $g()$ defined for WSPT schedules $\sigma$ for $Pm||\sum w_j C_j$ instances with $k$ distinct efficiencies is given by    
$$
g(\sigma) =   %\sum_{i = 1}^{k} g_i(\sigma) = 
\sum_{i = 1}^{k}  \Big[ \big(e_{(i)}-e_{(i+1)}\big)  \cdot \sum^{m}_{\ell=1} \Delta^2_{i,\ell} \Big].
$$
\end{definition}

To prove that minimizing the function $g()$ of Definition~\ref{def:g} above is indeed equivalent to minimizing the function $f()$ of Lemma~\ref{lem:GoemansWilliamson}, we make use of the following auxiliary lemma. 

\begin{lemma} 
\label{lem:Tomohiro}
For any $m$ non-negative integers $x_1,\ldots,x_m \in \mathbb{N}$ we have
$$
\sum_{\ell = 1}^m x_\ell^2 \,\, = \,\, \sum_{\ell =1}^m (\mu - x_\ell)^2 +  m\mu^2,
$$ 
where $\mu = \frac{1}{m} \sum_{\ell=1}^m x_\ell$.
\end{lemma}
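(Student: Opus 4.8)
The plan is to prove this purely by algebraic expansion; it is the standard sum-of-squares (bias--variance) decomposition and requires nothing beyond the definition of $\mu$. First I would expand the squared terms on the right-hand side, writing $(\mu - x_\ell)^2 = \mu^2 - 2\mu x_\ell + x_\ell^2$ and summing over $\ell$, which yields
$$
\sum_{\ell=1}^m (\mu - x_\ell)^2 = m\mu^2 - 2\mu \sum_{\ell=1}^m x_\ell + \sum_{\ell=1}^m x_\ell^2.
$$

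Next I would substitute the defining relation $\sum_{\ell=1}^m x_\ell = m\mu$ into the middle term, so that $-2\mu \sum_{\ell=1}^m x_\ell = -2m\mu^2$ and the first two terms collapse to $-m\mu^2$. Adding back the $m\mu^2$ that appears in the claimed identity exactly cancels this contribution, leaving $\sum_{\ell=1}^m x_\ell^2$, which is precisely the left-hand side. This completes the verification.

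There is essentially no obstacle here: the statement is an elementary polynomial manipulation, and the only ingredient beyond expansion is the defining equation for the mean $\mu$. I would only remark that the hypotheses that the $x_\ell$ be non-negative integers play no role in the argument --- the identity holds verbatim for arbitrary reals --- and these restrictions are presumably stated merely because in the intended application (with $x_\ell$ the machine loads $P(J^\ell_1 \cup \cdots \cup J^\ell_i)$ and $\mu = \mu_i$) the relevant quantities happen to be non-negative integers.
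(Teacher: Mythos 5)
Your proof is correct and follows exactly the same route as the paper's: expand $\sum_{\ell}(\mu-x_\ell)^2$, use $\sum_\ell x_\ell = m\mu$ to collapse the cross term, and rearrange. Your added remark that non-negativity and integrality are not needed is accurate but immaterial.
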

\begin{proof}
Expanding the term $\sum_{\ell = 1}^m (\mu - x_\ell)^2$ yields
\begin{align*}
\sum_{\ell = 1}^m (\mu - x_\ell)^2 = \sum_{\ell = 1}^m (\mu^2 - 2x_\ell \mu + x_\ell^2) = -m\mu^2 + \sum_{\ell = 1}^m x_\ell^2,
\end{align*}
and so the lemma holds. \qed
\end{proof}

\begin{lemma}
\label{lem:AlternativeForumlation}%
For any WSPT schedule $\sigma$ for $J$ we have
$$
2 f(\sigma) =  g(\sigma) + \sum^n_{j=1} w_j p_j +   \sum_{i = 1}^{k}\big(e_{(i)}-e_{(i+1)}\big)  \cdot m \mu^2_i.
$$
\end{lemma}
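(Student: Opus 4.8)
The plan is to start from the Goemans--Williamson formula of Lemma~\ref{lem:GoemansWilliamson}, which after multiplying through by two gives
$$
2f(\sigma) = \sum_{i=1}^{k} \Big[ \big(e_{(i)} - e_{(i+1)}\big) \cdot \sum_{\ell=1}^{m} P(J^\ell_1 \cup \cdots \cup J^\ell_i)^2 \Big] + \sum_{j=1}^n w_j p_j,
$$
and then to rewrite the inner sum of squares for each efficiency level using the algebraic identity of Lemma~\ref{lem:Tomohiro}.

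For each fixed $i$, I would apply Lemma~\ref{lem:Tomohiro} with $x_\ell = P(J^\ell_1 \cup \cdots \cup J^\ell_i)$, the load on machine $\ell$ of all jobs of efficiency at least $e_{(i)}$. The one point that needs checking is that the mean $\mu = \frac{1}{m}\sum_{\ell} x_\ell$ appearing in that lemma coincides with $\mu_i$: since $\sigma$ partitions the jobs among the machines, each job of efficiency at least $e_{(i)}$ lies on exactly one machine, so $\sum_{\ell} x_\ell = P(J_1 \cup \cdots \cup J_i)$ and hence $\mu = \frac{1}{m} P(J_1 \cup \cdots \cup J_i) = \mu_i$ by definition. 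With this identification, Lemma~\ref{lem:Tomohiro} yields
$$
\sum_{\ell=1}^m P(J^\ell_1 \cup \cdots \cup J^\ell_i)^2 = \sum_{\ell=1}^m \big(\mu_i - P(J^\ell_1 \cup \cdots \cup J^\ell_i)\big)^2 + m\mu_i^2 = \sum_{\ell=1}^m \Delta_{i,\ell}^2 + m\mu_i^2,
$$
where the last equality is just the definition $\Delta_{i,\ell} = \mu_i - P(J^\ell_1 \cup \cdots \cup J^\ell_i)$.

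Finally I would substitute this identity back into the expression for $2f(\sigma)$ and split the resulting sum by linearity into the part carrying the $\Delta_{i,\ell}^2$ terms and the part carrying the $m\mu_i^2$ terms. The first part is exactly $g(\sigma)$ by Definition~\ref{def:g}, the second is $\sum_{i=1}^k \big(e_{(i)} - e_{(i+1)}\big) \cdot m\mu_i^2$, and these together with the untouched $\sum_j w_j p_j$ term reproduce the claimed identity. I expect essentially no obstacle here: the argument is a direct term-by-term substitution, and the only substantive observation is the identification $\mu = \mu_i$, which rests solely on the fact that a WSPT schedule assigns every job to precisely one machine.
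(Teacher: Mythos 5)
Your proposal is correct and follows essentially the same route as the paper's proof: both start from Lemma~\ref{lem:GoemansWilliamson}, apply Lemma~\ref{lem:Tomohiro} with $x_\ell = P(J^\ell_1 \cup \cdots \cup J^\ell_i)$ for each fixed $i$, and split the resulting sum into $g(\sigma)$ plus the $m\mu_i^2$ terms. Your explicit check that the mean in Lemma~\ref{lem:Tomohiro} coincides with $\mu_i$ is a point the paper leaves implicit, but it is the same argument.
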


\begin{proof}
Consider some WSPT schedule $\sigma$. For each~$i \in \{1,\ldots,k\}$ and $\ell \in \{1,\ldots,m\}$, let $x_{i,\ell} = P(J^\ell_1 \cup \cdots \cup J^\ell_i)$. Then $\Delta_{i,\ell} = \mu_i - x_{i,\ell}$ for each $1 \leq \ell \leq m$, and so by Lemma~\ref{lem:Tomohiro} we have 
\begin{align*}
2 f(\sigma) \,\,&= \sum_{i = 1}^{k}  \Big[ \big(e_{(i)}-e_{(i+1)}\big)  \cdot \sum^m_{\ell=1} x_{i,\ell}^2 \Big] +  \sum^n_{j=1} w_j p_j \\ 
&= \sum_{i = 1}^{k}  \left[ \big(e_{(i)}-e_{(i+1)}\big)   \cdot  \left[ \sum^m_{\ell=1}\left(\mu_i-x_{i,\ell}\right)^2 +m \mu^2_i\right] \right] +  \sum^n_{j=1} w_j p_j\\
&= g(\sigma) + \sum^n_{j=1} w_j p_j +   \sum_{i = 1}^{k}\big(e_{(i)}-e_{(i+1)}\big)  \cdot m \mu^2_i , \\
\end{align*}
and the lemma thus follows. \qed
\end{proof}

Observe that $g(\sigma)$ is the only term in the right-hand side that depends on the schedule~$\sigma$; every other term depends only on the instance $J$ of $Pm|| \sum w_jC_j$. It follows that finding a schedule $\sigma$ that minimizes $f(\sigma)$ is equivalent to finding a schedule $\sigma$ that minimizes $g(\sigma)$.

\subsection{An upper bound on $g()$}

The next step is to show an upper bound on $g()$ for optimal schedules of $J$ that depends only on $w_{\max}$, $p_{\max}$, and $m$, and not on $n=|J|$.

\begin{lemma} 
\label{lemma:opt-bound}
There is a schedule $\sigma$ for $J$ with $g(\sigma) \leq m \cdot w_{\max} \cdot p_{\max}^2$.
\end{lemma}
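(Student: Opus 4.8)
The plan is to exhibit one explicit WSPT schedule whose deviations are uniformly small, and then bound $g$ by combining these deviation bounds with a telescoping sum over the efficiency gaps. The schedule I would use is the natural greedy one: process the jobs in non-increasing order of efficiency (breaking ties within an efficiency class arbitrarily) and assign each job in turn to the machine that is currently least loaded. Since jobs are placed in non-increasing efficiency order, each machine receives its jobs in WSPT order, so the resulting $\sigma$ is indeed a WSPT schedule in the sense of Lemma~\ref{lem:smith}.

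The key step is a load-balancing invariant: at every point in this greedy process, the difference between the most loaded and the least loaded machine is at most $p_{\max}$. This is the standard list-scheduling argument. Consider the machine of maximum load at some moment and let $j$ be the last job placed on it. Just before $j$ was assigned, this machine was least loaded, so its load then equaled the minimum load at that time; assigning $j$ increased it by $p_j \le p_{\max}$. Since machine loads never decrease, the current minimum load is at least the minimum load at that earlier moment, and therefore the current maximum load exceeds the current minimum by at most $p_{\max}$.

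I would then apply this invariant at the checkpoints that matter for $g$. After all jobs of efficiency at least $e_{(i)}$, i.e.\ all of $J_1 \cup \cdots \cup J_i$, have been placed, machine $\ell$ carries load $P(J^\ell_1 \cup \cdots \cup J^\ell_i)$, and $\mu_i$ is the average of these loads over the $m$ machines. Since all $m$ loads lie within an interval of length at most $p_{\max}$, their mean $\mu_i$ lies in that interval as well, so every deviation satisfies $|\Delta_{i,\ell}| \le p_{\max}$. Hence $\sum_{\ell=1}^m \Delta_{i,\ell}^2 \le m \cdot p_{\max}^2$ for every $i$.

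Finally I would substitute this into Definition~\ref{def:g} and telescope:
$$
g(\sigma) = \sum_{i=1}^k \big(e_{(i)} - e_{(i+1)}\big) \sum_{\ell=1}^m \Delta_{i,\ell}^2 \le m \cdot p_{\max}^2 \sum_{i=1}^k \big(e_{(i)} - e_{(i+1)}\big) = m \cdot p_{\max}^2 \cdot e_{(1)},
$$
using $e_{(k+1)} = 0$. The proof closes by noting $e_{(1)} = \max_j w_j / p_j \le w_{\max}$, since processing times are positive integers and hence $p_j \ge 1$, which yields $g(\sigma) \le m \cdot w_{\max} \cdot p_{\max}^2$. I expect the only real content to be the load-balancing invariant; the telescoping and the bound $e_{(1)} \le w_{\max}$ are routine.
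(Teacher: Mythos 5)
Your proposal is correct and follows essentially the same route as the paper: the same greedy WSPT load-balancing schedule, the same invariant that the max--min load gap stays at most $p_{\max}$ (you prove it via the standard ``last job on the heaviest machine'' argument where the paper uses a direct induction over iterations, but these are interchangeable), the same observation that $\mu_i$ lies between the extreme loads, and the same telescoping of the efficiency gaps to $e_{(1)} \leq w_{\max}$.
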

\begin{proof}
We first show that there exists a WSPT schedule $\sigma$ with $|\Delta_{i,\ell}|  \leq p_{\max}$ for all $1 \leq i \leq k$ and $1 \leq \ell \leq m$. Consider a scheduling algorithm that first orders the jobs in non-increasing order of efficiencies. Then, it assigns the jobs to the machines, based on the above ordered, such that at any iteration $j \in \{1,\ldots,n\}$ job $j$ is assigned to the least loaded machine. This algorithm is usually refereed to as a load balancing algorithm, and is commonly used in the literature to obtain an approximate solution for scheduling problems on parallel machines (see, \emph{e.g.,}~\cite{Azar,Taub}). We claim that by applying this algorithm the difference between the load of any two machines in any of the iterations is at most~$p_{\max}$. 

This trivially holds for iteration $j=1$ since only one machine receives a job. Now, consider iteration~$j \geq 2$ and assume, without loss of generality, that the machine loads at the end of iteration~$j-1$ satisfy $L_1\leq L_2\leq\ldots \leq L_m$ with~$L_m-L_1 \leq p_{\max}$. In iteration $j$, job~$j$ is assigned to the least loaded machine, increasing its load to $L_1+p_j$, while other machine loads remain unchanged. Thus, for any machine $\ell \in \{2, \ldots, m\}$, we have:
$$
L_1+p_j-L_\ell \leq L_1+p_j-L_1=p_j \leq p_{\max},
$$
and
$$
L_\ell- (L_1 + p_j) \leq L_m- L_1 - p_j \leq p_{\max} - p_j \leq p_{\max}.
$$
Since the load of other machines remains unchanged, a similar bound holds for any pair of machines. Hence, at every iteration~$j$, the difference between the most and the least loaded machine remains at most~$p_{\max}$.

Consider now the iteration in which the last job that belongs to $J_i$ is assigned to the least loaded machine. As the average load on each machine~$\mu_i$ is both at most the maximal load and at least the minimal load, we have $|\Delta_{i,\ell}|  \leq p_{\max}$ for all $1 \leq i \leq k$ and $1 \leq \ell \leq m$. Now, as $e_{(1)} \leq w_{\max}$, we also have 
\begin{align*}
g(\sigma) = &\sum_{i = 1}^{k}  \Big[ \big(e_{(i)}-e_{(i+1)}\big)  \cdot \sum^{m}_{\ell=1} \Delta^2_{i,\ell} \Big] \le \sum_{i = 1}^{k}  \Big[ \big(e_{(i)}-e_{(i+1)}\big)  \cdot \sum^{m}_{\ell=1}  p^2_{\max} \Big] =\\ 
&m \cdot  p_{\max}^2 \cdot \sum_{i = 1}^{k} \big(e_{(i)}-e_{(i+1)}\big) = m \cdot  p_{\max}^2 \cdot e_{(1)} \le m \cdot w_{\max} \cdot p_{\max}^2,
\end{align*}
and so the lemma follows. \qed
\end{proof}

\subsection{Proof of Theorem~\ref{thm:structure}}

The final ingredient that we need before proving Theorem~\ref{thm:structure}, is the following easy lower bound on the difference between any two consecutive efficiencies:
\begin{lemma}
\label{lem:EfficienciesDiffernce}%
$e_{(i)}-e_{(i+1)} \geq 1/p_{\max}^2$ for all $i \in \{1,\ldots,k\}$.
\end{lemma}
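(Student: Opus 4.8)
The plan is to exploit the fact that every efficiency is a ratio of positive integers whose denominator is bounded by $p_{\max}$, and that two distinct such ratios cannot be arbitrarily close. First I would recall that for each $i \in \{1,\ldots,k\}$ the efficiency $e_{(i)}$ equals $e_j = w_j/p_j$ for some job $j \in J_i$, where by the problem definition $w_j$ and $p_j$ are positive integers with $p_j \leq p_{\max}$. Thus each $e_{(i)}$ is a fraction with numerator $\geq 1$ and denominator in $\{1,\ldots,p_{\max}\}$.

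For the generic case $i < k$, I would fix jobs realizing the two consecutive efficiencies and write $e_{(i)} = a/b$ and $e_{(i+1)} = c/d$ with $a,b,c,d$ positive integers and $b,d \leq p_{\max}$. Since the efficiencies are distinct and ordered, $e_{(i)} > e_{(i+1)}$, which upon clearing denominators gives $ad > bc$; as $ad$ and $bc$ are integers this forces $ad - bc \geq 1$. The gap is then estimated directly:
$$
e_{(i)}-e_{(i+1)} = \frac{a}{b} - \frac{c}{d} = \frac{ad-bc}{bd} \geq \frac{1}{bd} \geq \frac{1}{p_{\max}^2}.
$$
The boundary case $i = k$ must be handled separately, because here $e_{(k+1)} = 0$ rather than a ratio of two jobs. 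In that case $e_{(k)} = a/b$ with $a \geq 1$ and $b \leq p_{\max}$, so $e_{(k)} - e_{(k+1)} = a/b \geq 1/p_{\max} \geq 1/p_{\max}^2$, using $p_{\max} \geq 1$; this bound is in fact stronger than required.

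Since the statement is a self-contained number-theoretic fact, there is no genuine technical obstacle: the only points demanding care are that the integrality of both $w_j$ and $p_j$ is invoked (it is precisely integrality that yields $ad - bc \geq 1$ rather than merely $ad - bc > 0$), and that the degenerate final gap to $e_{(k+1)} = 0$ is treated on its own. I expect the cross-multiplication step to be the crux, with the rest being routine bookkeeping.
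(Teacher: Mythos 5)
Your proof is correct and follows essentially the same route as the paper's: write the two consecutive efficiencies as ratios of positive integers with denominators at most $p_{\max}$, cross-multiply, and use integrality to lower-bound the numerator by $1$. In fact you are slightly more careful than the paper, which writes $e_{(i+1)} = w'/p'$ with $w' \geq 1$ and thereby glosses over the boundary case $i = k$ where $e_{(k+1)} = 0$; your separate treatment of that case is a small but genuine improvement in rigor.
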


\begin{proof}
Let $e_{(i)} = w/p$ and $e_{(i+1)} = w'/p'$ for some $w, w' \in \{1,\ldots,w_{\max}\}$ and $p, p' \in \{1,\ldots,p_{\max}\}$. Recall that $e_{(i)} > e_{(i+1)}$ which implies that $wp' > w'p$. Then 
\begin{align*}
e_{(i)}-e_{(i+1)} =\frac{w}{p} - \frac{w'}{p'} = \frac{wp' - w'p}{pp'} \ge \frac{1}{pp'} \ge \frac{1}{p_{\max}^2}, 
\end{align*}
and the lemma follows. \qed
\end{proof}

%Combining all ingredients together, we provide a proof for Theorem~\ref{thm:wjCjStructure}.

%\begin{lemma} \label{lemma:di-bound}
%    In the optimal schedule, $|d_i| \le w_{\max}^{1/2}\, p_{\max}^2$ for all $i \in [\tau]$.
%\end{lemma}
\begin{proof}[of Theorem~\ref{thm:structure}]
Let $\sigma^*$ be some WSPT schedule for $J$ where we have $\Delta_{i^*,\ell^*} >  \sqrt{m\cdot w_{\max}} \cdot p_{\max}^2$ for some $1 \leq i^* \leq k$ and $1 \leq \ell^* \leq m$. Then by Lemma~\ref{lem:EfficienciesDiffernce}, we get that
\begin{align*}
g(\sigma) &= \sum_{i = 1}^{k}  \Big[ \big(e_{(i)}-e_{(i+1)}\big)  \cdot \sum^{m}_{\ell=1} \Delta^2_{i,\ell} \Big] \geq \big(e_{(i^*)}-e_{(i^*+1)}\big) \cdot \Delta^2_{i^*,\ell^*} \\ 
& >   \big(e_{(i^*)}-e_{(i^*+1)}\big) \cdot m \cdot w_{\max} \cdot p^4_{\max} \geq  m\cdot w_{\max} \cdot p^2_{\max}.
\end{align*}
By Lemma~\ref{lemma:opt-bound} there exists a schedule $\sigma$ with $g(\sigma) \leq m \cdot w_{\max} \cdot p^2_{\max} < g(\sigma^*)$. This implies by Lemma~\ref{lem:AlternativeForumlation} that $f(\sigma) < f(\sigma^*)$, and so $\sigma^*$ is not optimal by Lemma~\ref{lem:GoemansWilliamson}. \qed
\end{proof}

\section{Dynamic Programming Algorithm}

We next apply our structural result stated in Theorem~\ref{thm:structure} to obtain a pseudo-polynomial time algorithm for $Pm|| \sum w_jC_j$. According to Theorem~\ref{thm:structure}, we can restrict our attention to WSPT schedules where for each $i \in \{1,\ldots,k\}$, the processing time of jobs with efficiency at least $e_{(i)}$ is distributed evenly on any pair of machines up to an additive factor of~$c = \lfloor\,\, (m \cdot w_{\max})^{1/2} \cdot p_{\max}^2 \,\,\rfloor = O(\sqrt{w_{\max}} \cdot p^2_{\max})$. This allows us to design an efficient dynamic programming algorithm in cases where $c$ is relatively small.
\main*
\noindent Our dynamic program will be over vectors $x \in [-c, c]^m$ whose entries will correspond to deviation in loads on each machine. Before proceeding to describe our dynamic program, we describe a few operations on vectors that we will use in our algorithm.

%However, there are vectors $\mathbf{x} \in [-c, c]^{\binom{m}{2}}$ which do not correspond to actual pairwise differences of machine loads. We deal with issue this as follows. Recall that $J_i \subseteq J$ denotes the subset of jobs with efficiency $e_{(i)}$. The first step of our algorithm is to compute the sets $\sum_i=\sum^{m-1}(\{p_j : j \in J_i\})$, for each $i \in \{1,\ldots,k\}$, using Lemma~\ref{}. Recall that an element in each set $\sum_i$ is an $(m-1)$-tuple of integers $\mathbf{y}=(y_1,\ldots,y_{m-1})$ corresponding to~$m-1$ pairwise disjoint subsets of jobs $J^1_i,\ldots,J^{m-1}_i \subseteq J_i$, where $P(J_{i,s})=y_s$ for each $s \in \{1,\ldots,m-1\}$. Thus, we can view each such tuple, along with $y_m=P(J_i) - (y_1 + \cdots + y_{m-1})$, as the load distribution of the jobs of $J_i$ on all $m$ machines. Since $\sum_i$ contains all such possible vectors, this gives all possible load distributions of~$J_i$.

\subsection{Minkowski sums of vector sets}

We will be using vectors of varying dimension $d = O(1)$ whose entries correspond to processing times of various job sets. We consider component-wise addition between two such vectors $x,y \in \mathbb{N}^d$, with $x=(x_1,\ldots,x_d)$ and $y=(y_1,\ldots,y_d)$, and so $x+y=(x_1+y_1,\ldots,x_d+y_d)$. Let $X,Y \subseteq \mathbb{N}^d$. We define $X+Y$ as the \emph{Minkowski sum} of $X$ and $Y$; that is, $X + Y = \{x+y : x \in X, y \in Y\}$. It is well known that one can efficiently compute the Minkowski sum $X+Y$ using multivariate polynomial multiplication. We briefly review how this can be done below. 

Let $p_X[\alpha_1,\ldots,\alpha_d] = \sum_{(x_1,\ldots,x_d) \in X} \Pi_{i=1}^d \alpha_i^{x_i}$ and $p_Y[\beta_1,\ldots,\beta_d] = \sum_{(y_1,\ldots,y_d) \in Y} \Pi_{i=1}^m \beta_i^{y_i}$. Then the exponents of all terms in $p_X \cdot p_Y$ with non-zero coefficients correspond to elements in the Minkowski sum $X + Y$. Note that if $P$ is a bound on both $\max_i x_i$ and $\max_i y_i$, then the maximum degree for each variable of these polynomials is $P$. Multiplying two $d$-variate polynomials of maximum degree $P$ on each variable can be reduced to multiplying two univariate polynomials of maximum degree $O(P^d)$ using Kronecker's map (see \emph{e.g.}~\cite{Pan94}). Since multiplying two univariate polynomials of maximum degree $O(P^d)$ can be done in $O(P^d \log P)$ time~\cite{CormenLRC2009}, we get that $X+Y$ can be computed in $\widetilde{O}(P^d)$ time.

\begin{lemma}
\label{lem:MinkowskiSum}%
Let  $X,Y \subseteq \{0,1,\ldots,P\}^d$ for some $d \in \mathbb{N}$. Then the Minkowski sum $X+Y$ can be computed in~$\widetilde{O}(P^d)$ time.     
\end{lemma}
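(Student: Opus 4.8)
The plan is to establish Lemma~\ref{lem:MinkowskiSum} by reducing the computation of the Minkowski sum to a single univariate polynomial multiplication via Kronecker substitution, exactly as sketched in the paragraph preceding the statement. The approach proceeds in three conceptual steps, each of which is essentially a routine application of a standard tool.

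First I would encode the two sets as multivariate polynomials. For $X \subseteq \{0,1,\ldots,P\}^d$, define the generating polynomial $p_X[\alpha_1,\ldots,\alpha_d] = \sum_{(x_1,\ldots,x_d) \in X} \prod_{i=1}^d \alpha_i^{x_i}$, and similarly for $Y$. The key observation is that a monomial $\prod_i \alpha_i^{z_i}$ appears with nonzero coefficient in the product $p_X \cdot p_Y$ precisely when $z = (z_1,\ldots,z_d)$ can be written as $x + y$ with $x \in X$ and $y \in Y$; that is, exactly when $z \in X+Y$. (Since all coefficients of $p_X$ and $p_Y$ are nonnegative, there is no risk of cancellation, so nonzero coefficients correspond bijectively to elements of the Minkowski sum.) Thus reading off the support of $p_X \cdot p_Y$ recovers $X+Y$.

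The second step is to collapse the $d$ variables into one using Kronecker's substitution. Since every exponent $z_i$ of a variable in the product lies in $\{0,1,\ldots,2P\}$, the substitution $\alpha_i \mapsto t^{(2P+1)^{i-1}}$ is injective on the relevant exponent vectors: a multi-exponent $(z_1,\ldots,z_d)$ maps to the single exponent $\sum_{i=1}^d z_i (2P+1)^{i-1}$, which is just the base-$(2P+1)$ representation and hence uniquely decodable. This turns $p_X$ and $p_Y$ into univariate polynomials of degree $O(P^d)$, and the product of these univariate polynomials carries exactly the same coefficient information as the multivariate product. One then inverts the map on each nonzero term to recover the corresponding element of $X+Y$.

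The third step invokes the running time of univariate polynomial multiplication: two univariate polynomials of degree $O(P^d)$ can be multiplied in $O(P^d \log P)$ time using FFT-based methods~\cite{CormenLRC2009}, and building the input polynomials and decoding the output each take time linear in the number of coefficients, namely $O(P^d)$. Hiding the logarithmic factor in $\widetilde{O}(\cdot)$ gives the claimed $\widetilde{O}(P^d)$ bound. The only point requiring a little care is the degree bookkeeping in the Kronecker step: one must use the bound $2P$ on the exponents of the \emph{product} (not $P$) when choosing the substitution base, to guarantee that distinct multi-exponents in $X+Y$ do not collide. Beyond that, the lemma is a direct assembly of standard components, so I do not anticipate any genuine obstacle.
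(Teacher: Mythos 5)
Your proposal is correct and follows essentially the same route as the paper: encoding $X$ and $Y$ as multivariate generating polynomials, collapsing to univariate polynomials via Kronecker's substitution, and multiplying with FFT in $O(P^d \log P)$ time. Your extra care about using base $2P+1$ so that exponents of the \emph{product} decode uniquely is a detail the paper glosses over, but it does not change the argument or the bound.
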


\subsection{Load distribution vectors}

For a subset of jobs $J' \subseteq J$ and an non-negative integer $d \leq m$, define $X^d(J')$ as the set of all $d$-dimensional vectors $(P(J'_1),\ldots,P(J'_d))$,  where $J'_1,\ldots,J'_d$ are pairwise disjoint subsets (not necessarily a partition) of $J'$. In this way, each vector in $X^d(J')$ represents a possible load distribution of some of the jobs in~$J'$ across the first $d$ machines. We can use Lemma~\ref{lem:MinkowskiSum} above to compute $X^d(J')$ rather efficiently: First, we split $J'$ into two sets $J'_1$ and $J'_2$ of roughly equal size. Then we recursively compute $X^d(J'_1)$ and $X^d(J'_2)$. Finally, we compute $X^d(J')$ by computing the Minkowski sum $X^d(J')= X^d(J'_1) + X^d(J'_2)$. As $X^d(\{j\})$ can be trivially computed in $O(d)=O(1)$ time for any job $j\in J$, the entire algorithm runs in~$\widetilde{O}(P(J')^d)$ time. 

Recall that $J_i = \{j \in J : w_j/p_j = e_{(i)}\}$ denotes the set of all jobs with efficiency $e_{(i)}$. For $i \in \{1,\ldots,k\}$, let $X_i$ denote the set of load distribution vectors of jobs in $J_i$. That is, $X_i$ is the set of all $m$-dimensional vectors $(P(J^1_i),\ldots,P(J^m_i))$, where $J^1_i,\ldots,J^m_i$ are pairwise disjoint and $J^1_i \cup \cdots \cup J^m_i = J_i$. We use the above observation to show that all sets $X_1,\ldots,X_k$ can be computed in $\tilde{O}(P^{m-1})$ time.

\begin{lemma}
\label{lem:LoadDistVectors}%
There is an algorithm that computes $X_1,\ldots,X_k$ in $\tilde{O}(P^{m-1})$ time.
\end{lemma}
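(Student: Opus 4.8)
The plan is to reduce the computation of each $X_i$ to a computation of the form $X^{m-1}(J_i)$, for which the divide-and-conquer procedure already described gives a good per-class bound, and then to sum the per-class costs using a super-additivity inequality that prevents a blow-up by a factor of $k$.

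First I would observe that although $X_i$ is defined via partitions of $J_i$ into $m$ parts, one of the $m$ machine loads is redundant: once the jobs placed on machines $1,\ldots,m-1$ are fixed, the content of machine $m$ is forced to be exactly the remaining jobs. Concretely, I would argue that the map sending an $(m-1)$-dimensional vector $(P(J^1_i),\ldots,P(J^{m-1}_i)) \in X^{m-1}(J_i)$ to the $m$-dimensional vector $(P(J^1_i),\ldots,P(J^{m-1}_i),\, P(J_i) - \sum_{\ell=1}^{m-1} P(J^\ell_i))$ is a bijection onto $X_i$. Indeed, the sets $J^1_i,\ldots,J^{m-1}_i$ appearing in the definition of $X^{m-1}(J_i)$ are pairwise disjoint subsets of $J_i$ (not necessarily covering $J_i$), and assigning the leftover jobs $J_i \setminus (J^1_i \cup \cdots \cup J^{m-1}_i)$ to machine $m$ turns any such choice into a genuine $m$-part partition, and conversely; distinct $(m-1)$-vectors differ in their first $m-1$ coordinates and hence map to distinct elements of $X_i$. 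Consequently $X_i$ can be obtained from $X^{m-1}(J_i)$ in time linear in $|X^{m-1}(J_i)| = O(P(J_i)^{m-1})$ by appending the forced last coordinate to every vector.

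Next I would invoke the divide-and-conquer Minkowski-sum procedure built on Lemma~\ref{lem:MinkowskiSum} to compute each $X^{m-1}(J_i)$ in $\tilde{O}(P(J_i)^{m-1})$ time, where the recursion splits $J_i$ into two halves and multiplies the corresponding $(m-1)$-variate polynomials. Summing over all $k$ efficiency classes, the total running time is $\tilde{O}\!\big(\sum_{i=1}^{k} P(J_i)^{m-1}\big)$.

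The main obstacle is controlling this sum: a naive estimate multiplies the per-class cost by $k$ and yields the too-slow bound $\tilde{O}(k \cdot P^{m-1})$. The key is the super-additivity inequality $\sum_{i=1}^{k} a_i^{r} \le \big(\sum_{i=1}^{k} a_i\big)^{r}$, valid for nonnegative reals $a_i$ and exponent $r \ge 1$ (since each $a_i \le \sum_j a_j$, we have $a_i^{r-1} \le (\sum_j a_j)^{r-1}$, and summing $a_i^{r} \le a_i (\sum_j a_j)^{r-1}$ over $i$ gives the claim). Applying it with $a_i = P(J_i)$ and $r = m-1 \ge 1$, together with $\sum_{i=1}^{k} P(J_i) = P$, yields $\sum_{i=1}^{k} P(J_i)^{m-1} \le P^{m-1}$, so the overall running time is $\tilde{O}(P^{m-1})$ as required. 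Here the $\tilde{O}(\cdot)$ absorbs the $O(\log)$-per-class overhead of the recursion, and since $p_j \ge 1$ implies $|J_i| \le P(J_i)$, the cost of reading the input is subsumed (for $m = 2$ the bound reads $\sum_i P(J_i) = P$, and the case $m = 1$ is trivial).
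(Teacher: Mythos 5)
Your proposal is correct and follows essentially the same route as the paper: compute $X^{m-1}(J_i)$ for each class via the divide-and-conquer Minkowski-sum procedure, append the forced $m$-th coordinate $P(J_i)-\sum_{\ell<m}x_\ell$, and bound the total cost by $\sum_i P(J_i)^{m-1}\le P^{m-1}$. The paper states this last inequality implicitly in one line, whereas you spell out the super-additivity argument and the bijection with $X_i$ explicitly; the content is the same.
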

\begin{proof}
The algorithm first computes all sets $X^{m-1}(J_1),\ldots,X^{m-1}(J_k)$ using the algorithm described above. This can be done in time
$$
\widetilde{O}(P(J_1)^{m-1} + \cdots + P(J_k)^{m-1}) = \widetilde{O}((P(J_1) + \cdots + P(J_k))^{m-1}) = \tilde{O}(P^{m-1}).
$$
To compute $X_i$ for each $i \in \{1,\ldots,k\}$, it then by appends to each vector $x \in X^{m-1}(J_i)$ an additional component~$x_m = P(J_i)-(x_1+\cdots +x_{m-1})$. Altogether this requires an additional $\tilde{O}(P^{m-1})$ time, and so the lemma follows. \qed
\end{proof}

\subsection{Load deviation vectors}

Next, for each $i \in \{1,\ldots,k\}$, we compute a set of $m$-dimensional vectors as follows. For each~$x\in X_i$ we construct a vector $y=y(x)$ where $y_\ell = P(J_i)/m - x_\ell$ for each~$\ell \in \{1.\ldots,m\}$. Let $Y_i=\{y(x): x \in X_i\}$ be the set of all vectors created this way.

We say that a vector $\Delta_i = (\Delta_{i,1},\ldots,\Delta_{i,m})$ is a \emph{load deviation vector} for $J_1 \cup \cdots \cup J_i$ if there exists a WSPT schedule $\sigma$ for $J$ with $\Delta_{i,\ell} = \mu_i-P(J_1^\ell \cup \dots \cup J_i^\ell)$ for all $1 \leq \ell \leq m$. We can use the vectors in $Y_i$ to compute all load deviation vectors $\Delta_i$.

\begin{lemma}
\label{lem:LoadDevVectors}%
Let $i \in \{1,\ldots,k\}$. An $m$-dimensional vector $\Delta_i$ is a load deviation vector for~$J_1 \cup \cdots \cup J_i$ iff $\Delta_i - y= \Delta_{i-1}$ for some $y \in Y_i$ and some load deviation vector~$\Delta_{i-1}$ for $J_1 \cup \cdots \cup J_{i-1}$ (where~$\Delta_0 = 0$). 
\end{lemma}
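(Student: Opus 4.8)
The plan is to prove this characterization of load deviation vectors by establishing both directions of the iff, relying on the recursive structure of how jobs are assigned across machines as we move from considering jobs with efficiency at least $e_{(i-1)}$ to those with efficiency at least $e_{(i)}$. The key observation is that $J_1 \cup \cdots \cup J_i = (J_1 \cup \cdots \cup J_{i-1}) \cup J_i$, a disjoint union, so the load on machine $\ell$ of the larger set decomposes additively as $P(J_1^\ell \cup \cdots \cup J_i^\ell) = P(J_1^\ell \cup \cdots \cup J_{i-1}^\ell) + P(J_i^\ell)$. The definition of $Y_i$ is engineered precisely so that $y_\ell = P(J_i)/m - P(J_i^\ell)$ captures the incremental deviation contributed by the jobs in $J_i$ alone. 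The arithmetic that must be verified is that these incremental deviations telescope correctly against the averages $\mu_i$.

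First I would unwind the definition of $\Delta_{i,\ell}$ and subtract the corresponding expression for $\Delta_{i-1,\ell}$. Using $\Delta_{i,\ell} = \mu_i - P(J_1^\ell \cup \cdots \cup J_i^\ell)$ and the additive decomposition above, one computes
\begin{align*}
\Delta_{i,\ell} - \Delta_{i-1,\ell} &= (\mu_i - \mu_{i-1}) - \bigl(P(J_1^\ell \cup \cdots \cup J_i^\ell) - P(J_1^\ell \cup \cdots \cup J_{i-1}^\ell)\bigr) \\
&= (\mu_i - \mu_{i-1}) - P(J_i^\ell).
\end{align*}
Since $\mu_i = \tfrac{1}{m} P(J_1 \cup \cdots \cup J_i)$ and the union is disjoint, we have $\mu_i - \mu_{i-1} = \tfrac{1}{m} P(J_i) = P(J_i)/m$, so the right-hand side equals $P(J_i)/m - P(J_i^\ell) = y_\ell$ for the vector $y = y(x)$ associated to the load distribution $x = (P(J_i^1),\ldots,P(J_i^m)) \in X_i$. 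This identity $\Delta_{i,\ell} - \Delta_{i-1,\ell} = y_\ell$, holding componentwise for all $\ell$, is exactly the vector equation $\Delta_i - \Delta_{i-1} = y$, equivalently $\Delta_i - y = \Delta_{i-1}$.

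With this identity in hand, both directions follow quickly. For the forward direction, if $\Delta_i$ is a load deviation vector realized by some WSPT schedule $\sigma$, then restricting attention to the jobs of efficiency at least $e_{(i-1)}$ under the \emph{same} schedule $\sigma$ yields a valid load deviation vector $\Delta_{i-1}$ for $J_1 \cup \cdots \cup J_{i-1}$, and the induced partition of $J_i$ across machines gives a vector $x \in X_i$ whose associated $y = y(x) \in Y_i$ satisfies $\Delta_i - y = \Delta_{i-1}$. For the reverse direction, given any $y \in Y_i$ (arising from some partition of $J_i$) and any load deviation vector $\Delta_{i-1}$ realized by some WSPT schedule $\sigma'$ on $J_1 \cup \cdots \cup J_{i-1}$, I would construct a WSPT schedule $\sigma$ realizing $\Delta_i$ by scheduling the jobs of $J_1 \cup \cdots \cup J_{i-1}$ exactly as $\sigma'$ does, then appending the jobs of $J_i$ on each machine according to the partition underlying $y$, and finally placing the remaining lower-efficiency jobs arbitrarily in WSPT order. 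The base case $\Delta_0 = 0$ holds vacuously since $J_1 \cup \cdots \cup J_0 = \emptyset$.

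The main obstacle is a subtle well-definedness point in the reverse direction: I must confirm that gluing together an arbitrary realizing schedule $\sigma'$ for the $\Delta_{i-1}$ part with an arbitrary partition of $J_i$ for the $y$ part genuinely produces a legal WSPT schedule on all of $J$, rather than merely a partial assignment. This requires checking that the concatenation respects non-increasing efficiency order on each machine — which it does automatically, since every job in $J_i$ has strictly smaller efficiency than every job in $J_1 \cup \cdots \cup J_{i-1}$, and the still-unplaced jobs all have efficiency at most $e_{(i)}$ — and that every job of $J$ is eventually scheduled exactly once. Once this consistency is verified, the equivalence is immediate from the componentwise identity derived above.
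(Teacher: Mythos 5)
Your proposal is correct and follows essentially the same route as the paper: both derive the componentwise identity $\Delta_{i,\ell}-\Delta_{i-1,\ell}=P(J_i)/m-P(J_i^\ell)=y_\ell$ from the disjoint decomposition $J_1\cup\cdots\cup J_i=(J_1\cup\cdots\cup J_{i-1})\cup J_i$ and then read off both directions of the equivalence. If anything, you are more careful than the paper on the backward direction (which it dismisses as ``similar arguments''), since you explicitly verify that gluing a realizing schedule for $\Delta_{i-1}$ to a partition of $J_i$ and the remaining lower-efficiency jobs yields a legal WSPT schedule of all of $J$.
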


\begin{proof}
Suppose $\Delta_i$ is a load deviation vector for $J_1 \cup \cdots \cup J_i$, and let $\sigma$ be a WSPT schedule for $J$, where $\Delta_{i,\ell} = \mu_i-P(J_1^\ell \cup \cdots \cup J_i^\ell)$ for all $1 \leq \ell \leq m$. Let $\Delta_{i-1}$ be the $m$-dimensional vector defined by~$\Delta_{i-1,\ell} = \mu_{i-1}-P(J_1^\ell \cup \cdots \cup J_{i-1}^\ell)$ for all $1 \leq \ell \leq m$. Then $\Delta_{i-1}$ is a load deviation vector for~$J_1 \cup \cdots \cup J_{i-1}$. Now, consider the vector $x = (P(J^1_i),\ldots,P(J^m_i)) \in X_i$, and let $y=y(x) \in Y_i$. Then for all $1 \leq \ell \leq m$, we have
\begin{align*}
\Delta_{i,\ell} &= \mu_i-P(J_1^\ell \cup \cdots \cup J_i^\ell) \\ 
&= \frac{P(J_1 \cup \cdots \cup J_i)}{m} -  P(J_1^\ell \cup \dots \cup J_i^\ell) \\
&=  \frac{P(J_1 \cup \cdots \cup J_{i-1})}{m} + \frac{P(J_i)}{m} -  P(J_1^\ell \cup \cdots \cup J_{i-1}^\ell)-P(J^\ell_i) \\
&= %\mu_{i-1} + \frac{P(J_i)}{m} -  P(J_1^\ell \cup \dots \cup J_{i-1}^\ell) -P(J^\ell_i)  \,\,=\,\, 
y_\ell + \Delta_{i-1,\ell},
\end{align*}
and so the forward direction of the lemma holds. The backward direction holds as well by similar arguments. \qed    
\end{proof}

%Finally, as we are only interested in pairwise differences which are in the range $[-c,c]$, we compute the sets $Y^*_i = Y_i \cap [-c, c]^{\binom{m}{2}}$ for each $i \in \{1,\ldots,k\}$.

\subsection{Dynamic program}

Our dynamic program will be over $m$-dimensional load deviation vectors. For each $i \in \{0,\ldots,k\}$, we construct a dynamic programming table $T_i \colon [-c, c]^m \to \mathbb{N}$, where~$c=O(\sqrt{w_{\max}} \cdot p^2_{\max})$) as mentioned above. For each $\Delta_i \in [-c, c]^m$, entry $T_i[\Delta_i]$ stores the minimum value of $g(\sigma)$ among all WSPT schedules $\sigma$ for jobs $J_1\cup \dots \cup J_i$ such that $\Delta_{i,l}=\mu_i- P(J^\ell_1 \cup \cdots \cup J^\ell_i)$ for each $1 \leq i \leq k$ and $1 \leq \ell \leq m$. We define~$T_i[\Delta_i]=\infty$ if no such schedule exists. More formally, 
$$
T_i[\Delta_i] = \min_{\sigma} \sum_{j = 1}^{i}  \Big[ \big(e_{(j)}-e_{(j+1)}\big)  \cdot \sum^m_{\ell=1} \Delta^2_{j,\ell} \Big],
$$ 
where the minimum is taken over all WSPT schedules $\sigma$ where $\Delta_i$ is the load deviation vector of $\sigma$ for $J_1 \cup \cdots \cup J_i$. In this way, the entry with the minimum value in $T_k$ will correspond to the minimum value of $g(\sigma)$ for any WPST schedule~$\sigma$ for~$J$, which according to Lemma~\ref{lem:AlternativeForumlation} and Lemma~\ref{lem:GoemansWilliamson} is equivalent to the minimum total weighted completion time of the input set of jobs~$J$.

We compute each table $T_i$ as follows. For $i = 0$, we have $T_0[0] = 0$ and $T_0[\Delta_0] = \infty$ for all other values $\Delta_0 \in [-c, c]^{m} \setminus \{0\}$. Then, for $i \ge 1$, we compute $T_i[\Delta_i]$ for any $\Delta_i \in [-c,c]^m$ using the following recurrence:
\begin{align*}
T_i[\Delta_i] = \min_{y_i \in Y^*_i} \Big\{T_{i-1}[\Delta_i-y_i] \,\,+\,\, (e_{(i)}-e_{(i+1)}) \cdot \sum^m_{\ell=1}   \Delta_{i,\ell}^2\Big\},
\end{align*}
where~$Y^*_i = Y_i\cap [-2c, 2c]^{m}$ for each $i \in \{1,\ldots,k\}$. Naturally, if $\Delta_i-y_i \notin [-c, c]^{m}$, we set $T_{i-1}[\Delta_i-y_i]=\infty$ in the recursion above. Correctness of this recurrence follows directly from Lemma~\ref{lem:LoadDevVectors}, and from the fact that we can restricted ourselves to deviation vectors in $[-c,c]^m$ due to Theorem~\ref{thm:structure}. %Note that the fact that $\Delta_i \in [-c,c]^m$ and $(\Delta_i-y_i) \in [-c,c]^m$ implies that $y_i \in [-2c,2c]^m$. 

\subsection{Time complexity}
According to Lemma~\ref{lem:LoadDistVectors}, we can compute all sets of vectors $X_1,\ldots,X_k$ in $\widetilde{O}(P^{m-1})$ time. It follows that also the sets $Y^*_1,\ldots,Y^*_k$ can be computed in similar time. Note that entries~$y_{i,\ell}$ in each vector $y_i \in Y^*_i$ are fractional, but may only take up to $4cm$ different values, since~$m \cdot y_{i,\ell}$ is integer and $-2cm \leq m \cdot y_{i,\ell} \leq 2cm$. Thus, we have $|Y^*_i| \le (4cm)^m$, and each entry~$T_i[\Delta_i]$ can be computed from $T_{i-1}$ in $O(c^m)$ time. For a similar reason, each entry $\Delta_{i,\ell}$ in a load deviation vector $\Delta_{i}$ for $J_1 \cup \ldots \cup J_i$ may be fractional, but can only take up to $2cm$ different values. Thus, there are at most $(2cm)^m$ different entries per table, and each table can be computed in $O(c^{2m})$ time. It follows that  computing all tables $T_0, \dots, T_{k}$  requires $O(k c^{m})$ time, which altogether gives us a running time of 
$$
\widetilde{O}(P^{m-1}) +  O( k \cdot ( \sqrt{w_{\max}} \cdot p_{\max}^2)^{2m}) = \widetilde{O}(P^{m-1} + w_{\max}^{m+1} \cdot p_{\max}^{4m+1}),
$$
where the last equality follows from the fact that~$k$ is upper bounded by  $w_{\max} \cdot p_{\max}$. This completes the proof of 
Theorem~\ref{thm:main}.

\section{The Two Machine Case}

In the following section we present an improved algorithm for the two machine case, \emph{i.e.} the $P2|| \sum w_j C_j$ problem. This algorithm closely resembles the one described above, but relies on a slightly different structure theorem. Rather than focusing on the deviation of each machine's load from the average, it is based on the load difference between the two machines.

For a given WSPT schedule $\sigma$ and $i \in \{1,\ldots,k\}$, let $\delta_i$ denote  the difference in the load of jobs with efficiency at least~$e_{(i)}$ between the first and second machine; that is, $\delta_i = P(J^1_1 \cup \cdots J^1_i) - P(J^2_1 \cup \cdots J^2_i)$. We define an alternative version of the function $g()$ in Definition~\ref{def:g} using the $\delta_i$ values. 
\begin{definition}
\label{def:g2}
The function $g_2()$ defined for WSPT schedules $\sigma$ for $Pm||\sum w_j C_j$ instances with $k$ distinct efficiencies is given by    
$$
g_2(\sigma) =   %\sum_{i = 1}^{k} g_i(\sigma) = 
\sum_{i = 1}^{k}  \big(e_{(i)}-e_{(i+1)}\big)  \cdot \delta^2_i.
$$
\end{definition}

Again, finding a WSPT schedule minimizes $g_2()$ is equivalent to finding a WSPT schedule minimizes $f()$ (and hence also the total weighted completion time), as can be seen by the following lemma.
\begin{lemma}
\label{lem:AlternativeForumlation2}%
For any WSPT schedule $\sigma$ for $J$ we have
$$
2 \cdot f(\sigma) = \frac{1}{2} \cdot g_2(\sigma) + \frac{1}{2} \cdot \sum_{i = 1}^{k}  \Big[ \big(e_{(i)}-e_{(i+1)}\big)  \cdot P(J_1 \cup \cdots \cup J_i)^2 \Big] + \sum^n_{j=1} w_j p_j.
$$
\end{lemma}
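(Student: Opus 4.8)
The plan is to reduce the claim to a single algebraic identity by starting from the Goemans--Williamson formula of Lemma~\ref{lem:GoemansWilliamson} specialized to $m=2$, and rewriting its inner sum of squared machine loads. First I would introduce shorthand for the two cumulative machine loads: set $A_i = P(J^1_1 \cup \cdots \cup J^1_i)$ and $B_i = P(J^2_1 \cup \cdots \cup J^2_i)$ for the total processing time of jobs with efficiency at least $e_{(i)}$ placed by $\sigma$ on machine $1$ and machine $2$ respectively. With this notation, by definition $\delta_i = A_i - B_i$, and trivially $A_i + B_i = P(J_1 \cup \cdots \cup J_i)$, since every job of efficiency at least $e_{(i)}$ is scheduled on exactly one of the two machines.

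The key observation is the polarization-type identity $A_i^2 + B_i^2 = \tfrac{1}{2}(A_i - B_i)^2 + \tfrac{1}{2}(A_i + B_i)^2$, which holds for any reals and is verified by expanding the right-hand side. Substituting the two expressions above yields $A_i^2 + B_i^2 = \tfrac{1}{2}\delta_i^2 + \tfrac{1}{2} P(J_1 \cup \cdots \cup J_i)^2$. I would then double the formula of Lemma~\ref{lem:GoemansWilliamson} to obtain $2f(\sigma) = \sum_{i=1}^{k}\big(e_{(i)}-e_{(i+1)}\big)\sum_{\ell=1}^{2} P(J^\ell_1 \cup \cdots \cup J^\ell_i)^2 + \sum_{j=1}^n w_j p_j$, and note that the inner sum $\sum_{\ell=1}^{2} P(J^\ell_1 \cup \cdots \cup J^\ell_i)^2$ is exactly $A_i^2 + B_i^2$, so the decomposition applies termwise.

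After the substitution I would split the weighted sum over $i$ linearly into two pieces. The piece collecting the $\tfrac{1}{2}\delta_i^2$ contributions is precisely $\tfrac{1}{2} g_2(\sigma)$ by Definition~\ref{def:g2}, and the piece collecting the $\tfrac{1}{2} P(J_1 \cup \cdots \cup J_i)^2$ contributions is the second summand in the statement; the term $\sum_{j=1}^n w_j p_j$ carries through unchanged from Lemma~\ref{lem:GoemansWilliamson}, matching the final term. Reassembling these three pieces gives the claimed identity.

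There is no genuine obstacle here: the result is a one-line rewriting driven by the sum-of-squares decomposition, and the only care required is bookkeeping of the factors of $\tfrac{1}{2}$, which is why phrasing the target as an identity for $2f(\sigma)$ and doubling Lemma~\ref{lem:GoemansWilliamson} first is the cleanest route. It is worth contrasting this with the general $m$ case of Lemma~\ref{lem:AlternativeForumlation}, where the appropriate decomposition instead routes through the mean--deviation identity of Lemma~\ref{lem:Tomohiro}; the two-machine case admits the simpler direct split precisely because with two machines the single difference $\delta_i$ together with the fixed total load $P(J_1 \cup \cdots \cup J_i)$ already determines both individual machine loads.
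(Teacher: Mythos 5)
Your proposal is correct and follows essentially the same route as the paper's proof: both start from Lemma~\ref{lem:GoemansWilliamson} with $m=2$, apply the termwise identity $x_{i,1}^2 + x_{i,2}^2 = \tfrac{1}{2}(x_{i,1}+x_{i,2})^2 + \tfrac{1}{2}(x_{i,1}-x_{i,2})^2$, and identify the two resulting sums with $\tfrac{1}{2}g_2(\sigma)$ and the cumulative-load term while the $\sum_j w_j p_j$ term carries through. There is no gap.
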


\begin{proof}
Consider some WSPT schedule $\sigma$ for $J$. For each~$i \in \{1,\ldots,k\}$ and $\ell \in \{1,2\}$, let $x_{i,\ell} = P(J^\ell_1 \cup \cdots \cup J^\ell_i)$. Then $\delta_i = x_{i,1} - x_{i,2}$ for each~$i \in \{1,\ldots,k\}$, and so by Lemma~\ref{lem:GoemansWilliamson} we have
\begin{align*}
2 f(\sigma) \,\,&= \sum_{i = 1}^{k}  \Big[ \big(e_{(i)}-e_{(i+1)}\big)  \cdot (x^2_{i,1}+x^2_{i,2}) \Big] +  \sum^n_{j=1} w_j p_j \\ 
&= \sum_{i = 1}^{k}  \left[ \big(e_{(i)}-e_{(i+1)}\big)  \cdot \frac{1}{2} \cdot  \Big[ (x_{1,1}  + x_{i,2})^2 + (x_{1,1}  - x_{i,2})^2  \Big] \right] +  \sum^n_{j=1} w_j p_j\\
&= \sum_{i = 1}^{k}  \left[ \big(e_{(i)}-e_{(i+1)}\big)  \cdot \frac{1}{2} \cdot  \Big[ P(J_1 \cup \cdots \cup J_i)^2 + \delta^2_i  \Big] \right] +  \sum^n_{j=1} w_j p_j\\
&= \frac{1}{2} \cdot g_2(\sigma) + \frac{1}{2} \cdot \sum_{i = 1}^{k}  \Big[ \big(e_{(i)}-e_{(i+1)}\big)  \cdot P(J_1 \cup \cdots \cup J_i)^2 \Big] + \sum^n_{j=1} w_j p_j. \\ 
\end{align*}
The lemma thus follows. \qed
\end{proof}

We can now use the same arguments used for proving Theorem~\ref{thm:structure} to show that the $\delta_i$'s are bounded in any optimal WSPT schedule. Indeed, the WSPT schedule $\sigma$ used in the proof of Lemma~\ref{lemma:opt-bound} has $\delta_i \leq p_{\max}$ for each $i \in \{1,\ldots,k\}$, and so $g(\sigma) \leq w_{\max} \cdot p_{\max}^2$. Thus, since the difference between any two consecutive efficiencies $e_{(i)}-e_{(i+1)}$ is at least $p^2_{\max}$, we find that any WSPT schedule $\sigma$ with $\delta_i > \sqrt{w_{\max}} \cdot p^2_{\max}$ has $g(\sigma) > w_{\max} \cdot p_{\max}^2$, and therefore cannot be optimal.

\begin{theorem}
\label{thm:structure2}
In any optimal WSPT schedule for any $P2||\sum w_j C_j$ instance with $k$ distinct efficiencies we have 
$$
|\delta_i| \leq  \sqrt{w_{\max}} \cdot p_{\max}^2
$$ 
for each $1 \leq i \leq k$.    
\end{theorem}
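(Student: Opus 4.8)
The plan is to mirror the proof of Theorem~\ref{thm:structure} almost verbatim, replacing the role of the deviation vector $(\Delta_{i,1},\ldots,\Delta_{i,m})$ and the function $g()$ by the scalar $\delta_i$ and the function $g_2()$. The first observation I would record is that, by Lemma~\ref{lem:AlternativeForumlation2}, the quantity $g_2(\sigma)$ is the \emph{only} schedule-dependent summand in the expression for $2f(\sigma)$: both $\sum_{i}(e_{(i)}-e_{(i+1)})\,P(J_1\cup\cdots\cup J_i)^2$ and $\sum_j w_j p_j$ depend solely on how many jobs of each efficiency the instance contains, not on how $\sigma$ distributes them across the two machines. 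Hence a WSPT schedule minimizes $f$ if and only if it minimizes $g_2$, and it suffices to argue about $g_2$.

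Next I would produce a cheap comparison schedule. I would reuse the load-balancing WSPT schedule $\sigma$ constructed in the proof of Lemma~\ref{lemma:opt-bound}: there it is shown that at every iteration the two machine loads differ by at most $p_{\max}$, and since $\delta_i$ is exactly this load difference measured right after the last job of $J_i$ is placed, we get $|\delta_i|\leq p_{\max}$ for every $i$. Plugging this into Definition~\ref{def:g2} and telescoping gives
\[
g_2(\sigma)\;\le\;\sum_{i=1}^{k}\big(e_{(i)}-e_{(i+1)}\big)\,p_{\max}^2\;=\;e_{(1)}\,p_{\max}^2\;\le\;w_{\max}\,p_{\max}^2,
\]
the two-machine analogue of Lemma~\ref{lemma:opt-bound}. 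For the contradiction step I would take a WSPT schedule $\sigma^*$ with $|\delta_{i^*}|>\sqrt{w_{\max}}\,p_{\max}^2$ for some index $i^*$ and lower-bound $g_2(\sigma^*)$ by keeping only the $i^*$ term. Using Lemma~\ref{lem:EfficienciesDiffernce}, which guarantees $e_{(i^*)}-e_{(i^*+1)}\ge 1/p_{\max}^2$, this yields
\[
g_2(\sigma^*)\;\ge\;\big(e_{(i^*)}-e_{(i^*+1)}\big)\,\delta_{i^*}^2\;>\;\tfrac{1}{p_{\max}^2}\cdot w_{\max}\,p_{\max}^4\;=\;w_{\max}\,p_{\max}^2\;\ge\;g_2(\sigma).
\]
By the equivalence established in the first step this means $f(\sigma)<f(\sigma^*)$, so $\sigma^*$ cannot be optimal, which is precisely the claimed bound.

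I do not expect any genuine obstacle here, since the argument is a structural copy of Theorem~\ref{thm:structure} specialized to two machines; the only point needing a moment of care is confirming that every term besides $g_2(\sigma)$ in Lemma~\ref{lem:AlternativeForumlation2} is instance-dependent rather than schedule-dependent, which is what licenses the reduction from $f$ to $g_2$. A second minor check is that the load-balancing guarantee of Lemma~\ref{lemma:opt-bound} bounds the \emph{signed} two-machine difference by $p_{\max}$ in absolute value, so that it applies to $\delta_i$ with either sign.
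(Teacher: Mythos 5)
Your proposal is correct and follows essentially the same route as the paper: the paper likewise derives Theorem~\ref{thm:structure2} by reusing the load-balancing schedule from Lemma~\ref{lemma:opt-bound} to get $|\delta_i|\le p_{\max}$ and hence $g_2(\sigma)\le w_{\max}\cdot p_{\max}^2$, then combines Lemma~\ref{lem:EfficienciesDiffernce} with Lemma~\ref{lem:AlternativeForumlation2} to rule out any schedule with a larger $|\delta_{i^*}|$. Your two ``points of care'' (schedule-independence of the remaining terms, and the bound holding for the signed difference) are exactly the right things to check and both hold.
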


Our dynamic program will be over all possible values of $\delta_i$'s in a WSPT schedule for~$J$. Let $c = \lfloor \sqrt{w_{\max}} \cdot p^2_{\max} \rfloor$. For each $i \in \{0,\ldots,k\}$, we construct a dynamic programming table $T_i:[-c,c] \to \mathbb{N}$ where entry $T_i[x]$ stores the minimum value of
$$
T_i[\delta_i] = \min_{\sigma} \sum_{j = 1}^{i}   \big(e_{(j)}-e_{(j+1)}\big)  \cdot  \delta^2_j,
$$ 
where the minimum is taken over all WSPT schedules $\sigma$ in which $\delta_i$ is the difference in load between the first and second machine for jobs with efficiency at least~$e_{(i)}$. To compute all tables $T_i$, we again construct all load distribution vectors $X_1,\ldots,X_k$ using Lemma~\ref{lem:LoadDistVectors}. Then for each $i \in \{1,\ldots,k\}$, we construct the set $Y_i=\{x_1 - x_2 : x \in X_i\}$. It now holds for each $i \in \{1,\ldots,k\}$ that if $\delta_i$ is indeed the difference in the load of the two machines on jobs with efficiency at least~$e_{(i)}$ for some WSPT schedule, then $\delta_i - y= \delta_{i-1}$ for some $y \in Y_i$, and vice-versa. This yields the following recurrence:
\begin{align*}
T_i[\delta_i] = \min_{y \in Y^*_i} \Big\{T_{i-1}[\delta_i-y] + (e_{(i)}-e_{(i+1)}) \cdot x^2\Big\},
\end{align*}
where~$Y^*_i = Y_i \cap [-2c, 2c]$ for each $i \in \{1,\ldots,k\}$. Again, for $\delta_i-y \notin [-c, c]$ we let $T_{i-1}[\delta_i-y]=\infty$ in the recursion above.

Regarding the time complexity of this algorithm, first observe that all sets $X_i$, $Y_i$, and~$Y^*_i$, for $1 \leq i \leq k$, can be computed in $\widetilde{O}(P)$ time. As $|Y^*_i| = O(\sqrt{w_{\max}} \cdot p^2_{\max})$, each entry $T_i[\delta_i]$ can be computed from $T_{i-1}$ in $O(\sqrt{w_{\max}} \cdot p^2_{\max})$ time. As there are $O(\sqrt{w_{\max}} \cdot p^2_{\max})$ different entries per table, each table can be computed in $O(w_{\max} \cdot p^4_{\max})$ time, and computing all tables $T_0, \dots, T_{k}$ requires $O(k \cdot w_{\max} \cdot p^4_{\max})$ time. Altogether, as~$k$ is upper bounded by $w_{\max} \cdot p_{\max}$, this gives us a running time of 
$$
\widetilde{O}(P) + O(k \cdot w_{\max} \cdot p^4_{\max}) = \widetilde{O}(P+ w_{\max}^{2} \cdot p_{\max}^{5}).
$$ 
Theorem~\ref{thm:secondary} thus follows.

\section{Discussion}

In this work, we have presented a pseudo-polynomial time algorithm for minimizing the total weighted completion time on parallel machines, which improves upon the classical Lawler and Moore algorithm for instances with bounded processing times and weights.
A key element of our approach is a new structural theorem about optimal schedules. In particular, we show that when jobs are grouped according to their efficiencies, the deviation in load distribution is bounded by a polynomial in $w_{\max}$ and $p_{\max}$ (see Theorem~\ref{thm:structure}). This bound is instrumental in the design of our dynamic programming algorithm.

Looking ahead, an intriguing open question is whether this bound can be improved. Specifically, can we obtain a bound that depends on only one of $w_{\max}$ or $p_{\max}$? Furthermore, it remains open whether there exists a nearly linear-time algorithm when only the weights or only the processing times are bounded. Finally, it is also interesting to explore the case where the number of machines~$m$ is unbounded.

\section*{Acknowledgements}
Tomohiro Koana is supported by the European Research Council (ERC) under the European Union’s
Horizon 2020 Research and Innovation Programme (project CRACKNP, Grant Number 853234) and JSPS KAKENHI (Grant Number JP20H05967).

\bibliographystyle{plain}
\bibliography{biblo}

\begin{thebibliography}{10}

\bibitem{DBLP:conf/focs/AfratiBCKKKMQSSS99}
Foto~N. Afrati, Evripidis Bampis, Chandra Chekuri, David~R. Karger, Claire
  Kenyon, Sanjeev Khanna, Ioannis Milis, Maurice Queyranne, Martin Skutella,
  Clifford Stein, and Maxim Sviridenko.
\newblock Approximation schemes for minimizing average weighted completion time
  with release dates.
\newblock In {\em Proceedings of the 40th Annual Symposium on Foundations of
  Computer Science ({FOCS}), New York, NY, {USA}}, pages 32--44. {IEEE}
  Computer Society, 1999.

\bibitem{Azar}
Yossi Azar, Andrei~Z. Broder, and Anna~R. Karlin.
\newblock On-line load balancing.
\newblock {\em Theoretical Computer Science}, 130(1):73--84, 1994.

\bibitem{Taub}
Yossi Azar and Shai Taub.
\newblock All-norm approximation for scheduling on identical machines.
\newblock In {\em Scandinavian Workshop on Algorithm Theory (SWAT)}, pages
  298--310, 2004.

\bibitem{DBLP:conf/stoc/BansalSS16}
Nikhil Bansal, Aravind Srinivasan, and Ola Svensson.
\newblock Lift-and-round to improve weighted completion time on unrelated
  machines.
\newblock In Daniel Wichs and Yishay Mansour, editors, {\em Proceedings of the
  48th Annual {ACM} {SIGACT} Symposium on Theory of Computing ({STOC}),
  Cambridge, MA, USA}, pages 156--167. {ACM}, 2016.

\bibitem{BrunoCoffman}
John Bruno, Edward G.~Coffman Jr., and Ravi Sethi.
\newblock Scheduling independent tasks to reduce mean finishing time.
\newblock {\em Communications of the {ACM}}, 17:382--387, 1974.

\bibitem{ConwayMM1967}
Richard~W. Conway, William~L. Maxwell, and Leslie~W. Miller.
\newblock {\em Theory of Scheduling}.
\newblock Addison-Wesley, 1967.

\bibitem{CormenLRC2009}
Thomas~H. Cormen, Charles~E. Leiserson, Ronald~L. Rivest, and Clifford Stein.
\newblock {\em Introduction to Algorithms, Third Edition}.
\newblock The MIT Press, 3rd edition, 2009.

\bibitem{DBLP:journals/mp/CorreaM22}
Jos{\'{e}}~R. Correa and Felipe~T. Mu{\~{n}}oz.
\newblock Performance guarantees of local search for minsum scheduling
  problems.
\newblock {\em Mathemtical Programming}, 191(2):847--869, 2022.

\bibitem{GareyJohnson}
Michael~R. Garey and David~S. Johnson.
\newblock {\em Computers and Intractability; A Guide to the Theory of
  NP-Completeness}.
\newblock W. H. Freeman \& Co., 1990.

\bibitem{GoemansWillaimson00}
Michel~X. Goemans and David~P. Williamson.
\newblock Two-dimensional gantt charts and a scheduling algorithm of {L}awler.
\newblock {\em {SIAM} Journal on Discrete Mathematics}, 13(3):281--294, 2000.

\bibitem{DBLP:journals/mp/HemmeckeOR13}
Raymond Hemmecke, Shmuel Onn, and Lyubov Romanchuk.
\newblock $n$-fold integer programming in cubic time.
\newblock {\em Mathemtical Programming}, 137(1-2):325--341, 2013.

\bibitem{HohnJacobs2015}
Wiebke H\"{o}hn and Tobias Jacobs.
\newblock On the performance of smith’s rule in single-machine scheduling
  with nonlinear cost.
\newblock {\em ACM Transactions on Algorithms}, 11(4), 2015.

\bibitem{DBLP:conf/focs/ImL16}
Sungjin Im and Shi Li.
\newblock Better unrelated machine scheduling for weighted completion time via
  random offsets from non-uniform distributions.
\newblock In Irit Dinur, editor, {\em Proceedings of the 57th Annual Symposium
  on Foundations of Computer Science ({FOCS}), New Brunswick, NJ, {USA}}, pages
  138--147. {IEEE} Computer Society, 2016.

\bibitem{DBLP:conf/sosa/JagerW24}
Sven J{\"{a}}ger and Philipp Warode.
\newblock Simple approximation algorithms for minimizing the total weighted
  completion time of precedence-constrained jobs.
\newblock In Merav Parter and Seth Pettie, editors, {\em Proceedings of the
  2024 Symposium on Simplicity in Algorithms ({SOSA}), Alexandria, VA, USA},
  pages 82--96. {SIAM}, 2024.

\bibitem{Jansenee}
Klaus Jansen and Kai Kahler.
\newblock On the complexity of scheduling problems with a fixed number of
  parallel identical machines.
\newblock In {\em Proceedings of the 48th International Conference on Current
  Trends in Theory and Practice of Computer Science ({SOFSEM}), Nový Smokovec,
  Slovakia}, pages 109--206. Springer, 2023.

\bibitem{Jaeger:18:Approximating-total-weighted}
Sven Jäger.
\newblock Approximating total weighted completion time on identical parallel
  machines with precedence constraints and release dates.
\newblock {\em Operations Research Letters}, 46(5):505--509, 2018.

\bibitem{KK1986}
Tsuyoshi Kawaguchi and Seiki Kyan.
\newblock Worst-case bound of an {LRF} schedule for the mean weighted flow time
  problem.
\newblock {\em SIAM Journal on Computing}, 15:1119--1129, 1986.

\bibitem{DBLP:journals/scheduling/KnopK18}
Dusan Knop and Martin Kouteck{\'{y}}.
\newblock Scheduling meets $n$-fold integer programming.
\newblock {\em Journal of Scheduling}, 21(5):493--503, 2018.

\bibitem{LawlerMoore}
Eugene~L. Lawler and James~M. Moore.
\newblock A functional equation and its application to resource allocation and
  sequencing problems.
\newblock {\em Management Science}, 16(1):77--84, 1969.

\bibitem{DBLP:conf/focs/Li17}
Shi Li.
\newblock Scheduling to minimize total weighted completion time via
  time-indexed linear programming relaxations.
\newblock In Chris Umans, editor, {\em 58th {IEEE} Annual Symposium on
  Foundations of Computer Science ({FOCS}), Berkeley, CA, USA}, pages 283--294.
  {IEEE} Computer Society, 2017.

\bibitem{Nicole}
Nicole Megow and Andreas~S Schulz.
\newblock On-line scheduling to minimize average completion time revisited.
\newblock {\em Operations Research Letters}, 32(5):485--490, 2004.

\bibitem{MnichW15}
Matthias Mnich and Andreas Wiese.
\newblock Scheduling and fixed-parameter tractability.
\newblock {\em Mathemtical Programming}, 154(1-2):533--562, 2015.

\bibitem{Pan94}
Victor~Y. Pan.
\newblock Simple multivariate polynomial multiplication.
\newblock {\em Journal of Symbolic Computation}, 18(3):183--186, 1994.

\bibitem{Pinedo2008}
Michael Pinedo.
\newblock {\em Scheduling: Theory, Algorithms and Systems}.
\newblock Springer, 2022.

\bibitem{SahniFPTAS}
Sartaj~K. Sahni.
\newblock Algorithms for scheduling independent tasks.
\newblock {\em Journal of the ACM}, 23:116--127, 1976.

\bibitem{SittersYang:17:A-2+epsilon-approximation}
Ren\'e Sitters and Liya Yang.
\newblock A $(2 + \epsilon )$-approximation for precedence constrained single
  machine scheduling with release dates and total weighted completion time
  objective.
\newblock {\em Operations Research Letters}, 46(4):438--442, 2018.

\bibitem{DBLP:conf/stoc/SkutellaW99}
Martin Skutella and Gerhard~J. Woeginger.
\newblock A {PTAS} for minimizing the weighted sum of job completion times on
  parallel machines.
\newblock In Jeffrey~Scott Vitter, Lawrence~L. Larmore, and Frank~Thomson
  Leighton, editors, {\em Proceedings of the 31th Annual {ACM} Symposium on
  Theory of Computing ({STOC}), Atlanta, Georgia, {USA}}, pages 400--407.
  {ACM}, 1999.

\bibitem{Smith1956}
Wayne~E. Smith.
\newblock Various optimizers for single-stage production.
\newblock {\em Naval Research Logistics Quarterly}, 3:59--66, 1956.

\bibitem{Weiss92}
Gideon Weiss.
\newblock Turnpike optimality of {S}mith's rule in parallel machines stochastic
  scheduling.
\newblock {\em Mathematics of Operations Research}, 17(2):255--270, 1992.

\end{thebibliography}

\end{document}